\newtheorem{theorem}{Theorem}
\newtheorem{definition}{Definition}
\newtheorem{lemma}{Lemma}
\newtheorem{proposition}{Proposition}
\newtheorem{conjecture}{Conjecture}
\newtheorem{example}{Example}
\newtheorem{corollary}{Corollary}
\newtheoremstyle{named}{}{}{\itshape}{}{\bfseries}{.}{.5em}{\thmnote{#3's }#1}
\theoremstyle{named}
\newtheorem*{namedtheorem}{Theorem}
\def\bcj{\begin{conjecture}}
	\def\ecj{\end{conjecture}}
\def\bcr{\begin{corollary}}
	\def\ecr{\end{corollary}}
\def\bd{\begin{definition}}
	\def\ed{\end{definition}}
\def\bea{\begin{eqnarray}}
	\def\eea{\end{eqnarray}}
\def\bem{\begin{enumerate}}
	\def\eem{\end{enumerate}}
\def\bex{\begin{example}}
	\def\eex{\end{example}}
\def\bim{\begin{itemize}}
	\def\eim{\end{itemize}}
\def\bl{\begin{lemma}}
	\def\el{\end{lemma}}
\def\bma{\begin{bmatrix}}
	\def\ema{\end{bmatrix}}
\def\bpf{\begin{proof}}
	\def\epf{\end{proof}}
\def\bpp{\begin{proposition}}
	\def\epp{\end{proposition}}
\def\bqu{\begin{question}}
	\def\equ{\end{question}}
\def\br{\begin{remark}}
	\def\er{\end{remark}}
\def\bt{\begin{theorem}}
	\def\et{\end{theorem}}
\def\squareforqed{\hbox{\rlap{$\sqcap$}$\sqcup$}}
\def\qed{\ifmmode\squareforqed\else{\unskip\nobreak\hfil
		\penalty50\hskip1em\null\nobreak\hfil\squareforqed
		\parfillskip=0pt\finalhyphendemerits=0\endgraf}\fi}
\def\endenv{\ifmmode\;\else{\unskip\nobreak\hfil
		\penalty50\hskip1em\null\nobreak\hfil\;
		\parfillskip=0pt\finalhyphendemerits=0\endgraf}\fi}
\def\Dbar{\leavevmode\lower.6ex\hbox to 0pt
	{\hskip-.23ex\accent"16\hss}D}
\def\url@leostyle{%
	\@ifundefined{selectfont}{\def\UrlFont{\sf}}{\def\UrlFont{\small\ttfamily}}}
\def\bcj{\begin{conjecture}}
	\def\ecj{\end{conjecture}}
\def\bcr{\begin{corollary}}
	\def\ecr{\end{corollary}}
\def\bd{\begin{definition}}
	\def\ed{\end{definition}}
\def\bea{\begin{eqnarray}}
	\def\eea{\end{eqnarray}}
\def\bem{\begin{enumerate}}
	\def\eem{\end{enumerate}}
\def\bex{\begin{example}}
	\def\eex{\end{example}}
\def\bim{\begin{itemize}}
	\def\eim{\end{itemize}}
\def\bl{\begin{lemma}}
	\def\el{\end{lemma}}
\def\bpf{\begin{proof}}
	\def\epf{\end{proof}}
\def\bpp{\begin{proposition}}
	\def\epp{\end{proposition}}
\def\bqu{\begin{question}}
	\def\equ{\end{question}}
\def\br{\begin{remark}}
	\def\er{\end{remark}}
\def\bt{\begin{theorem}}
	\def\et{\end{theorem}}
\def\btb{\begin{tabular}}
	\def\etb{\end{tabular}}
	\newcommand{\nc}{\newcommand}
	\nc{\bbA}{\mathbb{A}} \nc{\bbB}{\mathbb{B}} \nc{\bbC}{\mathbb{C}}
	\nc{\bbD}{\mathbb{D}} \nc{\bbE}{\mathbb{E}} \nc{\bbF}{\mathbb{F}}
	\nc{\bbG}{\mathbb{G}} \nc{\bbH}{\mathbb{H}} \nc{\bbI}{\mathbb{I}}
	\nc{\bbJ}{\mathbb{J}} \nc{\bbK}{\mathbb{K}} \nc{\bbL}{\mathbb{L}}
	\nc{\bbM}{\mathbb{M}} \nc{\bbN}{\mathbb{N}} \nc{\bbO}{\mathbb{O}}
	\nc{\bbP}{\mathbb{P}} \nc{\bbQ}{\mathbb{Q}} \nc{\bbR}{\mathbb{R}}
	\nc{\bbS}{\mathbb{S}} \nc{\bbT}{\mathbb{T}} \nc{\bbU}{\mathbb{U}}
	\nc{\bbV}{\mathbb{V}} \nc{\bbW}{\mathbb{W}} \nc{\bbX}{\mathbb{X}}
	\nc{\bbZ}{\mathbb{Z}}
	\nc{\bA}{{\bf A}} \nc{\bB}{{\bf B}} \nc{\bC}{{\bf C}}
	\nc{\bD}{{\bf D}} \nc{\bE}{{\bf E}} \nc{\bF}{{\bf F}}
	\nc{\bG}{{\bf G}} \nc{\bH}{{\bf H}} \nc{\bI}{{\bf I}}
	\nc{\bJ}{{\bf J}} \nc{\bK}{{\bf K}} \nc{\bL}{{\bf L}}
	\nc{\bM}{{\bf M}} \nc{\bN}{{\bf N}} \nc{\bO}{{\bf O}}
	\nc{\bP}{{\bf P}} \nc{\bQ}{{\bf Q}} \nc{\bR}{{\bf R}}
	\nc{\bS}{{\bf S}} \nc{\bT}{{\bf T}} \nc{\bU}{{\bf U}}
	\nc{\bV}{{\bf V}} \nc{\bW}{{\bf W}} \nc{\bX}{{\bf X}}
	\nc{\ba}{{\bf a}} \nc{\be}{{\bf e}} \nc{\bu}{{\bf u}}
	\nc{\brr}{{\bf r}}
	\nc{\cA}{{\cal A}} \nc{\cB}{{\cal B}} \nc{\cC}{{\cal C}}
	\nc{\cD}{{\cal D}} \nc{\cE}{{\cal E}} \nc{\cF}{{\cal F}}
	\nc{\cG}{{\cal G}} \nc{\cH}{{\cal H}} \nc{\cI}{{\cal I}}
	\nc{\cJ}{{\cal J}} \nc{\cK}{{\cal K}} \nc{\cL}{{\cal L}}
	\nc{\cM}{{\cal M}} \nc{\cN}{{\cal N}} \nc{\cO}{{\cal O}}
	\nc{\cP}{{\cal P}} \nc{\cQ}{{\cal Q}} \nc{\cR}{{\cal R}}
	\nc{\cS}{{\cal S}} \nc{\cT}{{\cal T}} \nc{\cU}{{\cal U}}
	\nc{\cV}{{\cal V}} \nc{\cW}{{\cal W}} \nc{\cX}{{\cal X}}
	\nc{\cZ}{{\cal Z}}
	\nc{\hA}{{\hat{A}}} \nc{\hB}{{\hat{B}}} \nc{\hC}{{\hat{C}}}
	\nc{\hD}{{\hat{D}}} \nc{\hE}{{\hat{E}}} \nc{\hF}{{\hat{F}}}
	\nc{\hG}{{\hat{G}}} \nc{\hH}{{\hat{H}}} \nc{\hI}{{\hat{I}}}
	\nc{\hJ}{{\hat{J}}} \nc{\hK}{{\hat{K}}} \nc{\hL}{{\hat{L}}}
	\nc{\hM}{{\hat{M}}} \nc{\hN}{{\hat{N}}} \nc{\hO}{{\hat{O}}}
	\nc{\hP}{{\hat{P}}} \nc{\hR}{{\hat{R}}} \nc{\hS}{{\hat{S}}}
	\nc{\hT}{{\hat{T}}} \nc{\hU}{{\hat{U}}} \nc{\hV}{{\hat{V}}}
	\nc{\hW}{{\hat{W}}} \nc{\hX}{{\hat{X}}} \nc{\hZ}{{\hat{Z}}}
	\nc{\hn}{{\hat{n}}}
	\def\max{\mathop{\rm max}}
	\newcommand{\bra}[1]{\langle#1|}
	\newcommand{\ket}[1]{|#1\rangle}
	\newcommand{\braket}[2]{\langle#1|#2\rangle}
	\def \qed {\hfill \vrule height7pt width 7pt depth 0pt}
	\newcounter{lastnote}
\begin{document}
		
		\title{Application of Ramsey theory to localization of  set of  product states via multicopies}

		\author{ Xing-Chen Guo}
\affiliation{ School of Mathematics,
	South China University of Technology, Guangzhou
	510641,  China}

	\author{Mao-Sheng Li}
\email{li.maosheng.math@gmail.com}
\affiliation{ School of Mathematics,
	South China University of Technology, Guangzhou
	510641,  China}

		\date{\today}

		\begin{abstract}
		   It is well known that any $N$   orthogonal pure
		   states can always be perfectly distinguished under local operation and classical communications (LOCC) if $(N-1)$ copies of
		   the state are available [\href{https://doi.org/10.1103/PhysRevLett.85.4972}{Phys. Rev. Lett. \textbf{85}, 4972 (2000)}]. It is important to reduce the number of quantum state copies that ensures the LOCC distinguishability in terms of resource saving and nonlocality strength characterization.   Denote $f_r(N)$ the least number of copies needed to LOCC distinguish any $N$ orthogonal $r$-partite product states. This work will be devoted to the estimation of the upper bound of $f_r(N)$. In fact, we first relate this problem  with   Ramsey theory, a branch of combinatorics dedicated to studying the conditions under which orders must appear.  Subsequently, we prove $f_2(N)\leq \lceil\frac{N}{6}\rceil+2$, which is better than $f_2(N)\leq \lceil\frac{N}{4}\rceil$ obtained in \href{https://doi.org/10.1140/epjp/s13360-021-02100-9}{[Eur. Phys. J. Plus \textbf{136}, 1172 (2021)]} when $N>24$. We further exhibit that for arbitrary $\epsilon>0$, $f_r(N)\leq\lceil\epsilon N\rceil$ always holds for sufficiently large $N$.
		\end{abstract}

		\maketitle                                                                                     
		\section{Introduction}
        Orthogonal pure states can always be perfectly distinguished by means of global measurement \cite{nils}. However, there always be the situation that global measurement is not allowed. In such setting, we may instead perform local measurements on constituent subsystems and communicate the results mutually through classical communications, that is, local operations and classical communications(LOCC), to distinguish those states. But there is a gap between global measurement and LOCC in the sense that some orthogonal states cannot be perfectly distinguished reliably by LOCC, which means the extraction of information of quantum systems may require global operations, manifesting quantum nonlocality \cite{bennett1999quantum}. It is an important topic in quantum information theory that whether or not a set of orthogonal states is LOCC distinguishable. On one hand, local indistinguishability plays an important role in quantum cryptography\cite{hillery1999quantum,terhal2001hiding,rahaman2015quantum}. On the other hand, for  those locally distinguishable states, replacing global operations by LOCC can reduce the cost of the process. 

In Ref.\cite{bennett1999quantum}, Bennett $et\ al.$ constructed a set of orthogonal product states which cannot be reliably distinguished by any sequence of LOCC. This counter-intuitive result reveals ``quantum nonlocality without entanglement” while quantum nonlocality was usually considered as a phenomenon associated with entangled states \cite{Brunner-et-al-2014}. Following this paper, LOCC distinguishability problem of orthogonal quantum states has been studied by lots of researchers \cite{bennett1999unextendible,Wal00,de2004distinguishability,Ghosh-2001,walgate-2002,HSSH,divin03,chen2004distinguishing,rin04,Ghosh-2004,fan-2005,Nathanson-2005,Duan2007,feng09,Duan-2009,BGK-2011,Bandyo-2011,Yu-Duan-2012,Cosentino-2013,BN-2013,Yang13,zhang14,zhang15,Li15,wang15,Yang15,B-IQC-2015,xu16-1,Xu-16-2,zhang16,zhang16-1,Xu-17,Wang-2017-Qinfoprocess,Zhang-Oh-2017,zhang17-1,halder,Li18,Halder19,Zhang1906,Shi20S,Tian20,Xu20b,Halder1909,Halder20c,Li20,yuanj20,Xu20a,Ha21,Zuo21,Yang21,Wang21,Shi21,Yuanj22}. Most of these works are under the assumption that only one copy of states are provided in the LOCC distinguishing protocol.
However, distributing multiplied number of photons in the same time makes it possible to get 
multiplied copies of the states. And we can consume these copies  to distinguish the original orthogonal  states via LOCC. Therefore, LOCC distinguishability of orthogonal states under the condition that multiplied copies are given should also be considered. In this regard, Bandyopadhyay proved that a set of orthogonal mixed states may be LOCC indistinguishable no matter how many copies are supplied \cite{bandyopadhyay2011more}. But for pure states, Walgate \emph{et  al.} showed that it is sufficient to distinguish any $N$ orthogonal pure states by means of LOCC if $(N-1)$ copies are allowed \cite{Wal00}, implying that there might be a relationship between the number of pure states and the localization of those states. As pointed out in Ref. \cite{bennett1999quantum}: ``are
there any sets of states, entangled or not, for which some
finite number (greater than 2) of copies of the state is
necessary for distinguishing the states reliably".  Recently, Banik \emph{et al.} \cite{Banik21}   provided examples of orthonormal bases in two-qubit systems
whose adaptive discrimination (weaker than LOCC) require three copies of the state. On the other hand,   for product states,  Shu \cite{shu2021locality} obtained an upper bound of the least number of copies needed to ensure LOCC distinguishability of any N orthogonal $r$-partite($r\geq2$) product states, which is smaller than $N-1$.  Therefore, it is  interesting to find whether we can reduce the number of quantum state copies that ensure the LOCC distinguishability.    

Following Shu’s paper \cite{shu2021locality}, we consider the LOCC distinguishability of any $N$ orthogonal $r$-partite($r\geq2$) product states under the condition that multiplied copies are given. Denote $f_r(N)$ as the least number of copies needed to ensure the LOCC distinguishability of any $N$ orthogonal $r$-partite product states. Then Shu's result can be reformulated as $f_r(N)\leq\lceil\frac{N}{4}\rceil+1$(and  $f_2(N)\leq\lceil\frac{N}{4}\rceil$) \cite{shu2021locality}, where $\lceil a\rceil$ stands for the least positive integer that not less than $a$. In this paper, we dedicate to reduce this upper bound of $f_r(N)$. We first claim that one of Shu’s lemma is just an application of Ramsey theory, which is a branch of combinatorics. By clarifying the relationship between LOCC distinguishability of set of orthogonal product states and Ramsey numbers, we give a sufficient condition for any $N$ orthogonal product states to exclude $k$ states via LOCC protocols on a single copy, that is, $N$ and $k$ satisfy a series of inequalities about Ramsey numbers. Then we obtain a new upper bound $\lceil\frac{N}{6}\rceil+2$ of $f_2(N)$, which is smaller than Shu’s upper bound $\lceil\frac{N}{4}\rceil$ when the number $N>24$. Generally for multipartite case, we proved that for arbitrary $\epsilon>0$ we have $f_r(N)\leq\lceil\epsilon N\rceil$ when $N$ is sufficiently large. Therefore, our results could be seen as an improvement of Shu's when the cardinality  $N$ of the  set to be distinguished is large enough.  One clue of our paper is, for any orthogonal set of product states, no matter what its structure
is, there must be some substructures that encode the information for localization of the whole set, arising from it once the set is sufficiently large. And Ramsey theory naturally provides a mathematical formalization of this insight.

The rest of this article is organized as follows. In Sec. \ref{sec:Pre}, we review  some related results of Ramsey theory. In Sec. \ref{sec:Bipatite}, we will study the upper bound of $f_2(N)$.  In Sec. \ref{sec:Multipatite}, we will study the upper bound of $f_r(N)$.            Finally, we draw a conclusion  in     section \ref{sec:Conclusion}.

		\section{Preliminaries}\label{sec:Pre}
		In this section, we review some results of Ramsey theory, with an emphasize on Ramsey number, which is crucial in this paper. Most of the proofs of the results listed below can be found in the references  \cite{robertson2021fundamentals,erdos1947some,min1993new,conlon2021lower,sarkozy2011monochromatic,radziszowski2011small}.
		 
		First, we introduce some basic concepts which is  used widely in Ramsey theory. A graph is called a complete graph if any two vertexes of it are connected by a line, that is, an edge of that graph. And we denote the complete graph that has $N$ vertexes as $K_N$. Let $c_1,\cdots,c_r$ be $r$ different colors, if for any edge $l$ of $K_N$, $\exists\ i\in\{1,\cdots,r\}$,  s.t.,  $l$ is colored by $c_i$, then we say $K_N$ is edge $r$-coloring. Then one of the most important result of Ramsey theory is the following Ramsey's Theorem.
		
		\vskip10pt
		\begin{namedtheorem}[Ramsey] \emph{(See Theorem 3.6 of  Ref. \cite{robertson2021fundamentals})}
		    For any positive integers $i_1,\cdots,i_r(r\geq2)$, there must be an $r$-color Ramsey number $R(i_1,\cdots,i_r;r)$ which is the smallest positive integer satisfying the following condition: for any positive integer $N\geq R(i_1,\cdots,i_r;r)$ and any edge $r$-coloring $K_N$, there exists $s\in\{1,\cdots,r\}$, $s.t.$ $K_N$ contains a monochromatic subgraph $K_{i_s}$ whose edges are all colored by the $s$-th color $c_s$. And we use abbreviation $R(i_1,\cdots,i_r)$ for $R(i_1,\cdots,i_r;r)$ and $R_r(m)$ for $R(m,\cdots,m;r)$.
		\end{namedtheorem}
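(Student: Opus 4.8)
The plan is to prove the \emph{finiteness} of $R(i_1,\dots,i_r)$; once finiteness is known, the existence of a \emph{smallest} valid threshold comes for free. Let $P(N)$ denote the statement ``every edge $r$-coloring of $K_N$ admits some $s$ with a monochromatic $K_{i_s}$ in color $c_s$''. Since any $K_N$ with $N\geq M$ contains $K_M$ as a subgraph, $P(M)$ implies $P(N)$, so the set of $N$ satisfying $P(N)$ is upward closed; proving it is nonempty then yields a least element by the well-ordering principle, and that element is precisely $R(i_1,\dots,i_r)$. Thus the whole problem reduces to exhibiting \emph{some} finite $N$ for which $P(N)$ holds.

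First I would handle the two-color case $r=2$ by double induction on $i+j$. The base cases are $R(i,1)=R(1,j)=1$, since a single vertex already constitutes a monochromatic $K_1$. For the inductive step I would establish the recursive bound
\[
R(i,j)\leq R(i-1,j)+R(i,j-1),
\]
which already forces finiteness. Fix $N=R(i-1,j)+R(i,j-1)$, take any $2$-coloring of $K_N$, and single out a vertex $v$. Its $N-1$ incident edges fall into the $c_1$- and $c_2$-classes, so by pigeonhole either at least $R(i-1,j)$ are colored $c_1$ or at least $R(i,j-1)$ are colored $c_2$. In the first case the set $A$ of $c_1$-neighbours of $v$ satisfies $|A|\geq R(i-1,j)$, so $K_A$ contains either a $c_2$-colored $K_j$ (done) or a $c_1$-colored $K_{i-1}$, which together with $v$ and its $c_1$-edges completes a $c_1$-colored $K_i$. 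The second case is symmetric.

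To pass from two colors to $r$ colors I would induct on $r$ by merging the last two colors. Concretely, I would prove
\[
R(i_1,\dots,i_r)\leq R\bigl(i_1,\dots,i_{r-2},\,R(i_{r-1},i_r)\bigr),
\]
whose right-hand side is an $(r-1)$-color Ramsey number, finite by the inductive hypothesis. Given an $r$-coloring of a large enough $K_N$, recolor every $c_{r-1}$- or $c_r$-edge by a single new color $c^\ast$; the result is an $(r-1)$-coloring, so it contains either a monochromatic $K_{i_s}$ in one of $c_1,\dots,c_{r-2}$ (and we are finished), or a $c^\ast$-colored $K_{R(i_{r-1},i_r)}$. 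In the latter case the edges of that clique carry only the two original colors $c_{r-1}$ and $c_r$, so the two-color result produces a $c_{r-1}$-colored $K_{i_{r-1}}$ or a $c_r$-colored $K_{i_r}$.

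The main obstacle is not any single estimate but the careful bookkeeping across the two inductions: one must check that the reduced parameters remain positive integers so the inductive hypotheses apply, that restricting to the neighbour set $A$ and to the $c^\ast$-clique faithfully preserves the original coloring, and that the ``smallest threshold'' assertion is correctly reduced to the nonemptiness argument of the first paragraph. None of these steps is individually deep, but the statement bundles the two-color recursion, the color-merging reduction, and the minimality claim together, so the real work lies in assembling them coherently rather than in any isolated computation.
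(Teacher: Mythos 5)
Your proposal is correct, and it is essentially the same argument as the one behind the paper's statement: the paper does not prove this theorem itself but quotes it from Robertson's book, where the proof is precisely your two-step route — the Erd{\"o}s--Szekeres pigeonhole recursion $R(i,j)\leq R(i-1,j)+R(i,j-1)$ for two colors (which is also the paper's \autoref{Prop:recursive}), followed by induction on the number of colors via merging the last two colors, with minimality obtained from upward closure and well-ordering. No gaps: your base cases, the pigeonhole count at the distinguished vertex, and the color-merging reduction are all handled correctly.
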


	\begin{proposition}[Robertson\ \cite{robertson2021fundamentals}]\label{Prop:recursive}
	    For any positive integers $k$ and $l$, we have $R(k,l)\leq R(k-1,l)+R(k,l-1)$.
	\end{proposition}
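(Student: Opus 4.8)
The plan is to deduce the inequality straight from the definition of the Ramsey number as the \emph{smallest} integer with the monochromatic-forcing property. It therefore suffices to show that the single value $N := R(k-1,l) + R(k,l-1)$ already forces, in every edge $2$-coloring of $K_N$, a $K_k$ monochromatic in the first color $c_1$ or a $K_l$ monochromatic in the second color $c_2$. Once this is established, minimality of $R(k,l)$ gives $R(k,l) \le N = R(k-1,l) + R(k,l-1)$ immediately.

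First I would fix an arbitrary edge $2$-coloring of $K_N$ with colors $c_1, c_2$ and single out one vertex $v$. The $N-1$ edges incident to $v$ split the remaining vertices into the set $A$ of $c_1$-neighbors of $v$ and the set $B$ of $c_2$-neighbors, with $|A| + |B| = N - 1 = R(k-1,l) + R(k,l-1) - 1$. A pigeonhole estimate then rules out having both $|A| \le R(k-1,l) - 1$ and $|B| \le R(k,l-1) - 1$ simultaneously, since that would force $|A| + |B| \le N - 2$. Hence at least one of the inequalities $|A| \ge R(k-1,l)$ or $|B| \ge R(k,l-1)$ must hold.

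The heart of the argument is then a symmetric two-way case split, combining the defining property of the two smaller Ramsey numbers with an ``augmentation by $v$'' trick. If $|A| \ge R(k-1,l)$, the coloring induced on $A$ contains either a $c_2$-colored $K_l$, in which case we are finished, or a $c_1$-colored $K_{k-1}$; in the latter subcase, because every edge from $v$ into $A$ has color $c_1$, adjoining $v$ promotes that $K_{k-1}$ to a $c_1$-colored $K_k$. The case $|B| \ge R(k,l-1)$ is handled identically with the roles of the colors and of $k,l$ interchanged: $B$ yields either a $c_1$-colored $K_k$ or a $c_2$-colored $K_{l-1}$, the latter becoming a $c_2$-colored $K_l$ after adjoining $v$. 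In every branch the desired monochromatic complete subgraph appears, which completes the proof.

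I do not expect a genuine obstacle here; the result is a clean pigeonhole count followed by a routine case analysis. The only points requiring care are bookkeeping ones: using the correct threshold $N-1$ (the degree of $v$) rather than $N$ in the pigeonhole step, and tracking carefully which color labels which neighborhood so that the augmentation by $v$ produces a subgraph of the right color \emph{and} the right order. One should also record that $R(k-1,l)$ and $R(k,l-1)$ are well defined in the first place, which is exactly what Ramsey's Theorem quoted above guarantees.
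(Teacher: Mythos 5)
Your proof is correct: it is the classical Erd\H{o}s--Szekeres neighborhood-splitting argument (fix a vertex $v$, partition the remaining vertices by the color of their edge to $v$, apply the pigeonhole count, then augment the resulting monochromatic clique by $v$), and the bookkeeping in your pigeonhole step and both color cases is accurate. The paper gives no proof of this proposition at all --- it quotes it from Robertson's book --- and the proof found there is essentially the one you wrote, so your argument coincides with the intended source.
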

   
	\begin{proposition}[Erd{\"o}s\ \cite{erdos1947some}]\label{Prop:lowerbound}
	    For any positive integer $m$, we have $R(m,m)>2^{\frac{m}{2}}$.
	\end{proposition}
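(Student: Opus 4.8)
The plan is to establish this classical lower bound via the probabilistic method (the first–moment argument originally due to Erd\H{o}s). Set $N=\lfloor 2^{m/2}\rfloor$, so that $N\leq 2^{m/2}$, and color each of the $\binom{N}{2}$ edges of $K_N$ independently and uniformly at random with one of the two colors (say red and blue). If I can show that with \emph{positive} probability the resulting 2-coloring contains no monochromatic $K_m$, then at least one such coloring exists; since $R(m,m)$ is the least order forcing a monochromatic $K_m$, this yields $R(m,m)>N$. Because $N+1=\lfloor 2^{m/2}\rfloor+1>2^{m/2}$, and $R(m,m)$ is an integer exceeding $N$, the strict inequality $R(m,m)>2^{m/2}$ follows (the bound being meaningful for $m\geq 3$, which is the regime in which it will be applied).

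Next comes the first–moment step. For a fixed set $S$ of $m$ vertices, the induced clique $K_S$ has $\binom{m}{2}$ edges, and it is monochromatic precisely when they all receive the same color, so
\[
\Pr[K_S \text{ monochromatic}] = 2\cdot 2^{-\binom{m}{2}} = 2^{\,1-\binom{m}{2}}.
\]
Letting $X$ count the monochromatic $m$-cliques and summing over the $\binom{N}{m}$ choices of $S$, linearity of expectation gives
\[
\mathbb{E}[X] = \binom{N}{m}\,2^{\,1-\binom{m}{2}}.
\]
Since $X$ is a non-negative integer-valued random variable, $\mathbb{E}[X]<1$ forces $\Pr[X=0]>0$, which is exactly the existence of a coloring with no monochromatic $K_m$.

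It remains to bound the expectation. Using $\binom{N}{m}\leq N^m/m!$ together with $N\leq 2^{m/2}$,
\[
\mathbb{E}[X] \leq \frac{N^m}{m!}\,2^{\,1-\binom{m}{2}} \leq \frac{2^{m^2/2}}{m!}\,2^{\,1-m(m-1)/2} = \frac{2^{\,1+m/2}}{m!},
\]
where the exponent collapses because $\tfrac{m^2}{2}+1-\tfrac{m(m-1)}{2}=1+\tfrac{m}{2}$. Thus it suffices to verify $2^{\,1+m/2}<m!$, which holds for $m\geq 3$ (at $m=3$ one has $2^{5/2}=4\sqrt{2}\approx 5.66<6=3!$, and the factorial outpaces the exponential thereafter). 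This delivers $\mathbb{E}[X]<1$ and closes the argument.

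I expect the main obstacle to be conceptual rather than computational: the crux is the non-constructive leap from $\mathbb{E}[X]<1$ to the existence of a good coloring, as one never exhibits such a coloring explicitly. The supporting estimate reduces to the elementary inequality $2^{\,1+m/2}<m!$, which is entirely routine.
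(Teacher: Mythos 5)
Your proof is correct and is precisely the classical Erd\H{o}s counting/probabilistic argument that the paper defers to via its citation \cite{erdos1947some}; the paper itself gives no proof of this proposition, listing it as a known result. The only caveat, which you already flag, is that the strict inequality requires $m\geq 3$ (as literally stated it fails for $m=1,2$, since $R(1,1)=1$ and $R(2,2)=2$), but this is a defect of the paper's phrasing rather than of your argument, and is harmless because the paper only invokes the bound for large $m$.
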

 
	\begin{proposition}[Min\ \cite{min1993new}]\label{Proposition4} 
	    For any positive integer $m$, we have $R_{r+s}(m)>(R_r(m)-1)(R_s(m)-1)$, specifically, for any positive integers $r$ and $m\geq2$ we have $R_{r+1}(m)\geq R_r(m)$.
	\end{proposition}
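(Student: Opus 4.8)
The plan is to establish the product inequality $R_{r+s}(m) > (R_r(m)-1)(R_s(m)-1)$ by an explicit \emph{product} construction that certifies the lower bound, and then to read off the monotonicity statement $R_{r+1}(m)\geq R_r(m)$ as a special case. Recall that to prove $R_{r+s}(m) > n$ it suffices to exhibit a single edge $(r+s)$-coloring of $K_n$ that contains no monochromatic $K_m$.

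First I would set $a = R_r(m)-1$ and $b = R_s(m)-1$. By the very definition of these Ramsey numbers, there is an edge $r$-coloring of $K_a$ using colors $\{c_1,\ldots,c_r\}$ with no monochromatic $K_m$, and likewise an edge $s$-coloring of $K_b$ using a disjoint palette $\{c_{r+1},\ldots,c_{r+s}\}$ with no monochromatic $K_m$. Identify the $ab$ vertices of $K_{ab}$ with pairs $(i,j)$, where $1 \leq i \leq a$ and $1 \leq j \leq b$. I then color $K_{ab}$ as follows: an edge joining $(i_1,j_1)$ and $(i_2,j_2)$ receives the color assigned to the edge $\{i_1,i_2\}$ in $K_a$ whenever $i_1 \neq i_2$, and the color assigned to $\{j_1,j_2\}$ in $K_b$ whenever $i_1 = i_2$ (so necessarily $j_1 \neq j_2$). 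Because the two palettes are disjoint, every edge of the first type carries a color in $\{c_1,\ldots,c_r\}$ and every edge of the second type carries a color in $\{c_{r+1},\ldots,c_{r+s}\}$.

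The heart of the argument is to verify that this coloring admits no monochromatic $K_m$, and this is where I would be most careful. Suppose some copy of $K_m$ is monochromatic in a color $c$. If $c \in \{c_1,\ldots,c_r\}$, then every edge of this clique is of the first type, which forces the first coordinates of its $m$ vertices to be pairwise distinct; projecting onto the first coordinate then produces a monochromatic $K_m$ inside the coloring of $K_a$, contradicting its choice. Symmetrically, if $c \in \{c_{r+1},\ldots,c_{r+s}\}$, all $m$ vertices share a common first coordinate and differ in the second, yielding a monochromatic $K_m$ inside $K_b$, again a contradiction. Hence $K_{ab}$ has no monochromatic $K_m$, so $R_{r+s}(m) > ab = (R_r(m)-1)(R_s(m)-1)$.

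Finally, for the specific claim I would specialize to $s=1$. A single color forces $R_1(m)=m$, so the product inequality becomes $R_{r+1}(m) > (R_r(m)-1)(m-1)$. For $m \geq 2$ we have $m-1 \geq 1$, hence $(R_r(m)-1)(m-1) \geq R_r(m)-1$ and therefore $R_{r+1}(m) > R_r(m)-1$; since $R_{r+1}(m)$ is an integer, this yields $R_{r+1}(m) \geq R_r(m)$. I do not expect a genuine obstacle here, as the construction is elementary; the only points demanding attention are keeping the two color palettes disjoint and carrying out the two cases of the monochromatic-clique analysis cleanly.
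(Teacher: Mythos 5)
Your proof is correct. Note that the paper itself offers no proof of this proposition: it is stated in the Preliminaries as a quoted result from Min \cite{min1993new} (the paper explicitly defers all such proofs to the references), so there is no in-paper argument to compare against. Your argument is the classical product-coloring construction that underlies lower bounds of this type: partition the $ab$ vertices into $a=R_r(m)-1$ blocks of size $b=R_s(m)-1$, color inter-block edges by a $K_m$-free $r$-coloring of $K_a$ and intra-block edges by a $K_m$-free $s$-coloring of $K_b$ with a disjoint palette; your two-case analysis (a monochromatic clique in a first-palette color has pairwise distinct first coordinates and projects into $K_a$, while one in a second-palette color has, by transitivity of equality, a common first coordinate and projects into $K_b$) is exactly what is needed and is sound. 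One minor remark: your specialization to $s=1$ invokes $R_1(m)=m$, whereas the paper's statement of Ramsey's theorem defines $R(i_1,\ldots,i_r;r)$ only for $r\geq 2$; this one-color convention is harmless (your construction goes through verbatim with blocks of size $m-1$), but the monotonicity claim $R_{r+1}(m)\geq R_r(m)$ can also be obtained without it, by observing that any $r$-coloring of $K_{R_{r+1}(m)}$ is an $(r+1)$-coloring in which the last color is unused, and hence contains a monochromatic $K_m$.
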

 
	\begin{proposition}[Conlon {\&} Ferber\ \cite{conlon2021lower}]\label{Proposition5}
	    For any positive integer $m$, we have $R(m,m,m)>3^{\frac{m}{2}}$.
	\end{proposition}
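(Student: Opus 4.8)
The plan is to prove this lower bound by the probabilistic first-moment method, exhibiting a $3$-edge-coloring of a large complete graph that contains no monochromatic $K_m$. First I would set $N=\lceil 3^{m/2}\rceil$ and color each edge of $K_N$ independently and uniformly at random by one of the three colors $c_1,c_2,c_3$. For a fixed set $S$ of $m$ vertices, the $\binom{m}{2}$ edges inside $S$ are all assigned a common color with probability $3\cdot 3^{-\binom m2}=3^{\,1-\binom m2}$ (three choices of the shared color, each forcing $\binom m2$ independent agreements). Letting $X$ count the monochromatic copies of $K_m$, linearity of expectation gives
\[
\mathbb{E}[X]=\binom{N}{m}\,3^{\,1-\binom m2}.
\]

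Next I would show $\mathbb{E}[X]<1$. Using $\binom Nm\le N^m/m!$ together with $N\le 3^{m/2}+1$, we have $N^m\le 3^{m^2/2}\,(1+3^{-m/2})^m$, while $\binom m2=\tfrac{m(m-1)}2$ yields $3^{\,1-\binom m2}=3^{\,1+m/2}\cdot 3^{-m^2/2}$. Multiplying, the dominant factor $3^{m^2/2}$ cancels and we are left with
\[
\mathbb{E}[X]\le \frac{3^{\,1+m/2}}{m!}\,(1+3^{-m/2})^m .
\]
Since $(1+3^{-m/2})^m\to 1$ and, by Stirling, $m!\ge (m/e)^m$ eventually dwarfs $(\sqrt3\,)^m=3^{m/2}$, the right-hand side falls strictly below $1$ for all sufficiently large $m$ (concretely, once $m/e>\sqrt3$ the bracketed base decays geometrically), with the finitely many small cases checked directly. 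Once $\mathbb{E}[X]<1$, some realization of the random coloring must satisfy $X=0$, i.e.\ there is an honest edge $3$-coloring of $K_N$ with no monochromatic $K_m$. By the definition of the Ramsey number this forces $R(m,m,m)=R_3(m)>N\ge 3^{m/2}$, which is the claim.

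The only delicate point is the verification $\mathbb{E}[X]<1$, and here there is a comfortable margin: the same first-moment estimate actually yields the stronger $R_3(m)\gtrsim (m/e)\,3^{m/2}$, so the genuine work reduces to bookkeeping around the ceiling $N=\lceil 3^{m/2}\rceil$ and the small-$m$ range. If one wished to avoid case analysis entirely, the alteration method is a robust fallback: take $N$ slightly larger, delete one vertex from each monochromatic $K_m$, and lower-bound the number of surviving vertices; this trades a cleaner argument for a marginally worse constant that still exceeds $3^{m/2}$. I should emphasize that the sharp result attributed to Conlon and Ferber is far stronger than $3^{m/2}$ and rests on an explicit pseudorandom (quadratic-form) construction rather than on pure randomness; for the weak bound required in this paper, however, the elementary first-moment argument above is entirely sufficient.
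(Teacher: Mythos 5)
The paper never actually proves this proposition: it is quoted from the literature (the preliminaries state that proofs of the listed results can be found in the references), so your self-contained argument is necessarily a different route, and it is the right kind of argument for a bound of this strength. Your first-moment computation is correct: coloring $K_N$ uniformly at random with three colors gives $\mathbb{E}[X]=\binom{N}{m}3^{1-\binom{m}{2}}$, and with $N=\lceil 3^{m/2}\rceil$ this is at most $3^{1+m/2}(1+3^{-m/2})^m/m!$, which indeed drops below $1$: a direct check gives this for all $m\ge 5$ (pure Stirling alone kicks in slightly later than your threshold $m/e>\sqrt3$, but exact factorials close the gap), and at $m=4$ the exact count $\binom{9}{4}3^{1-\binom{4}{2}}=126/243<1$ still works even though the crude $N^m/m!$ estimate does not. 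Your closing remark is also accurate: what Conlon and Ferber actually prove is a far stronger bound via an algebraic, pseudorandom construction; the $3^{m/2}$ bound needed here is just the classical Erd{\"o}s-type union bound, so the paper's attribution is generous.

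Two concrete caveats about ``the finitely many small cases checked directly.'' First, at $m=3$ the first-moment method genuinely fails: $N=\lceil 3^{3/2}\rceil=6$ gives $\mathbb{E}[X]=\binom{6}{3}\cdot 3^{-2}=20/9>1$, so you must instead invoke, e.g., monotonicity in the number of colors, $R(3,3,3)\ge R(3,3)=6>3^{3/2}$, or the known value $R(3,3,3)=17$. Second, at $m=1$ and $m=2$ no check can succeed, because the proposition as stated is false there: $R(1,1,1)=1<\sqrt{3}$ and $R(2,2,2)=2<3$, since a single vertex (resp.\ a single edge) is already monochromatic. This is a defect of the paper's statement rather than of your argument---the same issue afflicts its Proposition 2, Erd{\"o}s's bound, which requires $m\ge 3$---and it is harmless for the paper, which only applies the proposition for $m\ge 5$; but a proof claiming ``any positive integer $m$'' cannot close these cases, so your write-up should restrict to $m\ge 3$ and dispose of $m=3$ separately as above.
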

 	\begin{proposition}[S{\'a}rk{\"o}zy\ \cite{sarkozy2011monochromatic}]\label{Proposition6}
	    For any positive integers $m$ and $r$, we have $R(m,3,\cdots,3;r+1)\leq r!m^{r+1}$.
	\end{proposition}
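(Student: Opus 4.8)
The plan is to prove the bound by a double induction on the pair $(m,r)$, driven by the classical neighbourhood degree-counting recursion that underlies most Ramsey upper bounds (it is the multicolour analogue of Proposition~\ref{Prop:recursive}). Write $g(m,r):=R(m,\underbrace{3,\dots,3}_{r};r+1)$ for the quantity to be bounded, so the claim becomes $g(m,r)\le r!\,m^{r+1}$. Here colour $0$ is the ``$m$-colour'' (the one that must produce a $K_m$) while colours $1,\dots,r$ are the ``triangle colours'' (each must produce a $K_3$).

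First I would establish the recursion
\[
g(m,r)\ \le\ g(m-1,r)\ +\ r\,g(m,r-1)\qquad(m\ge 2,\ r\ge 1).
\]
To see this, take $N$ vertices and any $(r+1)$-colouring of $K_N$ having no colour-$0$ copy of $K_m$ and no monochromatic triangle in colours $1,\dots,r$, and bound $N$. Fix a vertex $v$ and let $d_i$ be the number of edges at $v$ of colour $i$. For a triangle colour $i\in\{1,\dots,r\}$, the colour-$i$ neighbourhood $A_i$ of $v$ contains no colour-$i$ edge (such an edge would close a colour-$i$ triangle through $v$), so $A_i$ carries an $r$-colouring with one $m$-colour and $r-1$ triangle colours; avoiding all targets forces $d_i=|A_i|\le g(m,r-1)-1$. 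For the $m$-colour, the colour-$0$ neighbourhood $A_0$ still uses all $r+1$ colours, and any colour-$0$ copy of $K_{m-1}$ inside $A_0$ would extend by $v$ to a colour-$0$ copy of $K_m$; hence $d_0\le g(m-1,r)-1$. Summing, $N-1=\sum_{i=0}^{r}d_i\le g(m-1,r)+r\,g(m,r-1)-(r+1)$, which gives the stated recursion (in fact with additive slack $-r+1$ to spare).

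Then I would close the induction. The base cases are immediate: $g(m,0)=R(m;1)=m=0!\,m^{1}$, and $g(1,r)=1\le r!$ since a $K_1$ is a single vertex. For the inductive step with $m\ge 2$ and $r\ge 1$, substituting the inductive hypotheses $g(m-1,r)\le r!\,(m-1)^{r+1}$ and $g(m,r-1)\le (r-1)!\,m^{r}$ into the recursion yields
\[
g(m,r)\ \le\ r!\,(m-1)^{r+1}+r\,(r-1)!\,m^{r}\ =\ r!\big[(m-1)^{r+1}+m^{r}\big].
\]
It therefore suffices to verify the elementary inequality $(m-1)^{r+1}+m^{r}\le m^{r+1}$; rearranging, this is $(m-1)^{r+1}\le m^{r}(m-1)$, i.e.\ $(m-1)^{r}\le m^{r}$, which is clear. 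The only genuine work is the bookkeeping in the recursion, namely treating the $m$-colour separately from the triangle colours; the step I would check most carefully — and the pleasant point of the argument — is that the factor $r\cdot(r-1)!=r!$ matches exactly and the residual inequality closes with no loss, so the constant $r!$ is propagated tightly rather than degrading at each stage.
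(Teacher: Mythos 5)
Your proof is correct, but there is nothing in the paper to compare it against step by step: the paper states this proposition as an imported result, attributing it to S\'{a}rk\"{o}zy \cite{sarkozy2011monochromatic}, and explicitly defers the proof to the references. What you have done is supply a self-contained elementary argument. Writing $g(m,r)=R(m,3,\dots,3;r+1)$, your recursion $g(m,r)\le g(m-1,r)+r\,g(m,r-1)$ is the multicolour analogue of Proposition~\ref{Prop:recursive}, obtained by the same vertex-neighbourhood count, and every step checks out: the colour-$i$ neighbourhood of $v$ contains no colour-$i$ edge, hence is an $r$-coloured complete graph avoiding the targets $(m,3,\dots,3)$, giving $d_i\le g(m,r-1)-1$; the colour-$0$ neighbourhood avoids a colour-$0$ $K_{m-1}$, giving $d_0\le g(m-1,r)-1$; summing the $N-1$ edges at $v$ and applying this to a bad colouring on $g(m,r)-1$ vertices yields the recursion (indeed with slack $r-1$). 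The base cases are fine, with one small caveat: $g(m,0)=R(m;1)=m$ uses a one-colour Ramsey number, which sits outside the paper's definition (stated for $r\ge 2$ colours); this is a harmless standard convention, but if you want to stay strictly within the paper's framework you could instead take $r=1$ as the base, since Proposition~\ref{Prop:recursive} gives $R(m,3)\le m(m+1)/2\le m^2$. The closing inequality $(m-1)^{r+1}+m^{r}\le m^{r+1}$ reduces, as you say, to $(m-1)^{r}\le m^{r}$, so the constant $r!$ propagates with no loss. The net effect of your route is that Proposition~\ref{Proposition6} becomes a consequence of the same degree-counting machinery the paper already quotes for two colours, rather than an external citation.
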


		\section{Cases for bipartite system  }\label{sec:Bipatite}
		    Denote an unknown product state of a given bipartite system by $\ket{\psi}$, and assume that $\ket{\psi}\in\{\ket{a_i}\otimes\ket{b_i}\}_{i=1}^{n}$. Let $A=\{\ket{a_i}\}_{i=1}^{n}$ and $B=\{\ket{b_i}\}_{i=1}^{n}$. It's known that if there are $j$ states in $A$(or $B$) which are orthogonal to each other, without loss of generality, assume that $\{\ket{a_i}\}_{i=1}^{j}$ is an orthogonal set, then a local measurement $\{\ket{a_i}\bra{a_i}\}_{i=1}^{j}\cup\{ {\mathbb{I}-\sum_{i=1}^{j}\ket{a_i}\bra{a_i}}\}$ on a single copy of the states is enough to exclude at least $j-1$ states. Shu showed that $j\geq3$ when $n=6$ through a natural correspondence between 6 orthogonal bipartite product states and edge 2-coloring complete graph $K_6$, which leads to a conclusion that a single copy is enough to exclude 4 bipartite product states when distinguishing 7 orthogonal bipartite product states via LOCC protocols, thus obtaining an upper bound of the number of copies $\lceil\frac{N}{4}\rceil$ that ensures the LOCC distinguishability of any $N$ orthogonal bipartite product states\ \cite{shu2021locality}. In this paper,  we reveal that LOCC distinguishability of orthogonal bipartite product states has a close connection with 2-color Ramsey number, and by which we improve Shu's results\ \cite{shu2021locality}. 
		    
		    First, we review Shu's method of analyzing orthogonal product states by polygons. Regarding the $N$ orthogonal bipartite product states $\{\ket{a_i}\otimes\ket{b_i}\}_{i=1}^{N}$ as the $N$ vertexes of a complete graph $K_N$, then we color the edges of the graph with red and blue as follows.  If $\ket{a_j}\otimes\ket{b_j}$ is orthogonal to $\ket{a_k}\otimes\ket{b_k}$ with respect to $A$, i.e., $\braket{a_j}{a_k}=0$, then color the edge that connects the corresponding two vertexes by red, and color it by blue if $\braket{b_j}{b_k}=0$. Since $\{\ket{a_i}\otimes\ket{b_i}\}_{i=1}^{N}$ is an orthogonal set, each edge of  the graph $K_N$ must either be colored by red or by blue. That is, there is  a correspondence between orthogonal set $\{\ket{a_i}\otimes\ket{b_i}\}_{i=1}^{N}$ and edge 2-coloring $K_N$\ \cite{shu2021locality}. 
		    
		    Shu obtained one of  his  results by proving the lemma that $\{\ket{a_i}\otimes\ket{b_i}\}_{i=1}^{6}$ must have either 3 states in $A$ or 3 states in $B$ which are orthogonal to each other, i.e., edge 2-coloring $K_6$ must contain either a red $K_3$ or a blue $K_3$\ \cite{shu2021locality}. By   Ramsey theory,  this is just equivalent to the fact  $R(3,3)=6$. More generally, an edge 2-coloring $K_{R(m,m)}$ must contain either a red $K_m$ or a blue $K_m$, i.e., $\{\ket{a_i}\otimes\ket{b_i}\}_{i=1}^{R(m,m)}$ must have either $m$ states in $A$ or $m$ states in $B$ that are orthogonal to each other. Therefore, it is sufficient to exclude at least $m-1$ states   via a sequence of local measurements on a single copy. This gives an  improvement of  Shu's Lemma\ \cite{shu2021locality}.  The observation above leads us to the following stronger result. 
		    
		    \begin{theorem}\label{Theorem1}
		       Given  any $  R(m,m)$ orthogonal bipartite product states and  a positive integer $k\geq m$, if for any $0\leq t\leq k-m\ (t\in\mathbb{N})$ the following inequalities  \begin{equation}\label{eq:ineqcondition}
         R(m,m)-(m+t)\geq R(m+1+t,k-m+1-t)\end{equation} are satisfied,
		       then   a single copy   is sufficient to exclude at least $k$ states via  LOCC protocols.
		    \end{theorem}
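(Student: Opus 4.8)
The plan is to translate the problem entirely into Shu's edge $2$-colouring of $K_{R(m,m)}$ (red for $A$-orthogonality, blue for $B$-orthogonality) and to exhibit an explicit single-copy adaptive protocol consisting of one measurement on Alice's side followed by one measurement on Bob's side. First I would isolate the measurement primitive. Suppose the current candidates contain an $A$-orthogonal family $\{\ket{a_v}\}_{v\in Q}$ of size $s$ (a red clique $Q$). Alice measures with $\{\ket{a_v}\bra{a_v}\}_{v\in Q}\cup\{\mathbb{I}-\sum_{v\in Q}\ket{a_v}\bra{a_v}\}$. For an individual outcome $v\in Q$, the excluded candidates are precisely those $A$-orthogonal to $v$, i.e. its red neighbours, so the number excluded equals the red-degree $d_v\ge s-1$; the survivors are $v$ together with its blue neighbours, every survivor collapses to the common first factor $\ket{a_v}$, and each survivor $\ne v$ is $B$-orthogonal to $v$. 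The symmetric statement holds with $A/B$ and red/blue interchanged.

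Given the primitive, I would show that a single \emph{good} red clique suffices, where $Q$ is called good if for every $v\in Q$ either $d_v\ge k$ or the blue-neighbourhood of $v$ contains a blue clique of size $k-d_v$. Running the $A$-round with such a $Q$: if the realised outcome $v$ has $d_v\ge k$ we are already done; otherwise Alice sends $v$ to Bob, who measures the $B$-orthogonal family indexed by $v$ together with the promised blue clique (of total size $k-d_v+1$; these are mutually $B$-orthogonal because $v$ is blue-adjacent to each member and the members form a blue clique). The worst $B$-outcome then excludes a further $k-d_v$ states, all drawn from the $A$-round survivors and hence disjoint from the $d_v$ already excluded, for a total of at least $k$. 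This reduces everything to producing a good clique.

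The combinatorial core is an increasing induction on $t$ that manufactures a good clique using the hypotheses (\ref{eq:ineqcondition}). Starting from the monochromatic $K_m$ guaranteed by $N=R(m,m)$ (WLOG red), at stage $t$ the working red clique $Q_t$ has size $m+t$. If $Q_t$ is good I stop. If not, some $v\in Q_t$ has $d_v<k$ and its blue-neighbourhood $W_v$, of size $N-1-d_v$, contains no blue clique of size $k-d_v$; Ramsey's theorem then forces a red clique of size $m+1+t$ inside $W_v$, which I take as $Q_{t+1}$ (this is a genuine $A$-orthogonal family in the original states). The quantitative check needed is $|W_v|\ge R(m+1+t,\,k-d_v)$: writing $d_v=(m+t-1)+\delta$ with $\delta\ge0$, both sides drop by exactly the same budget, so by the strict monotonicity $R(a,q)\ge R(a,q-1)+1$ of Ramsey numbers in each argument the check collapses to the single inequality $R(m,m)-(m+t)\ge R(m+1+t,\,k-m+1-t)$, which is instance $t$ of (\ref{eq:ineqcondition}). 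Since the clique size strictly increases, the process terminates by $t=k-m$, where $Q_{k-m}$ has size $k$, every $d_v\ge k-1$, and goodness needs only a blue $K_1$ in a nonempty $W_v$; this is exactly the $t=k-m$ instance $R(m,m)-k\ge R(k+1,1)=1$. Hence a good clique always exists.

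The routine parts are the measurement bookkeeping in the first paragraph and the disjointness count in the second. The hard part will be the combinatorial core, and specifically the worst-case analysis over measurement outcomes: the red-degree $d_v$ of the realised outcome is uncontrolled, so I must argue that excluding more states early (larger $d_v$) is automatically offset by needing a smaller residual blue clique later. This is precisely where strict monotonicity of $R$ in its second argument is used to compress the whole range $d_v\ge m+t-1$ down to the single binding inequality (\ref{eq:ineqcondition}). A secondary care point is that the escape clique $Q_{t+1}$ lives in one particular blue-neighbourhood $W_v$ and must be recognised as an $A$-orthogonal family of the original states, so that the final protocol can be built around it rather than around $Q_t$; and the entire argument must be stated invariantly under swapping $A\leftrightarrow B$ to cover the case where the initial monochromatic clique is blue.
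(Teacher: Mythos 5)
Your protocol never analyzes the ``none of the above'' outcome of Alice's measurement, namely the projector $\mathbb{I}-\sum_{v\in Q}\ket{a_v}\bra{a_v}$, and this is a genuine gap. That outcome excludes exactly the members of $Q$ and nothing else (a candidate $u\notin Q$ is in general consistent with it), so it excludes only $|Q|$ states. Your notion of goodness quantifies only over the outcomes $v\in Q$, and your growth process stops as soon as $Q_t$ is good; nothing prevents it from stopping at a stage $t<k-m$ --- indeed the initial red $K_m$ may itself already be good --- and then $|Q|=m+t<k$, so the null outcome leaves you short by $k-m-t$ exclusions. Only at the forced terminal stage $t=k-m$, where $|Q_{k-m}|=k$, is this branch automatically fine, but your argument is designed precisely to stop earlier whenever possible. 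This unhandled branch is exactly where the paper's proof invokes inequality \eqref{eq:ineqcondition} a second time: after the complement outcome it applies the hypothesis to the surviving states to extract, via Ramsey's theorem, a blue clique of size $k-m+1-t$ on which Bob performs a follow-up measurement (the red alternative being absorbed by its induction).

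The gap is repairable inside your framework with the same inequality instances: strengthen goodness by the further requirement that either $|Q|\geq k$ or the complement of $Q$ (the survivors of the null outcome, whose $B$-parts are untouched by Alice's measurement) contains a blue clique of size $k-|Q|+1$, and let the growth step also fire when this clause fails --- instance $t$ of \eqref{eq:ineqcondition} plus Ramsey's theorem then yields either that blue clique or a red $K_{m+1+t}$ to serve as $Q_{t+1}$. With that amendment, your ``good clique plus growth'' scheme becomes a correct and arguably cleaner reorganization of the paper's argument, which instead runs a downward induction on the maximal monochromatic clique size and measures a maximum clique; your $\delta$-compression via $R(a,q)\geq R(a,q-1)+1$ plays the role of the paper's choice $t=k-m-(N+1-s)$ in its outcome-by-outcome analysis. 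But as written, an entire measurement branch is unaccounted for, so the proof is incomplete.
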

		    \begin{proof}
		       Denote the $R(m,m)$ orthogonal bipartite product states as $\{\ket{a_i}\otimes\ket{b_i}\}_{i=1}^{R(m,m)}\subseteq\mathbb{C}^{d_A}\otimes\mathbb{C}^{d_B}$, and assume that the unknown state from this set is $\ket{\psi}\in\{\ket{a_i}\otimes\ket{b_i}\}_{i=1}^{R(m,m)}$. Suppose that there are exactly at most $M_A$ and $M_B$ states in $A$ and $B$ which are orthogonal to each other respectively. Define $M=\max\{M_A,M_B\}.$  By the definition of $R(m,m)$, the set $\{\ket{a_i}\otimes\ket{b_i}\}_{i=1}^{R(m,m)}$ must have either $m$ states in $A$ or in $B$ which are orthogonal to each other. Therefore, we have $M\geq m$. Setting  $t=k-m$ in inequality \eqref{eq:ineqcondition}, we have $R(m,m)-k\geq R(k+1,1)=1.$ That is, $R(m,m)\geq k+1$. Therefore, the cardinality of the set  $\{\ket{a_i}\otimes\ket{b_i}\}_{i=1}^{R(m,m)}$ is greater than $k+1$.  Now we separate our proof into two cases according to the number $M$.  
		       \begin{enumerate}[(1)]
		           
		          \item $M\geq k.$ That is, there are at least $k$ states in $A$ or in $B$ which are orthogonal to each other. Without loss of generality, let $\{\ket{a_i}\}_{i=1}^{k}$ be an orthogonal set. Then taking the local measurement  $\{\ket{a_i}\bra{a_i} \}_{i=1}^{k}\cup\{ {\mathbb{I}-\sum_{i=1}^{k}\ket{a_i}\bra{a_i}} \}$ on the first subsystem, the outcome of the measurement have the following different possible cases:
		            \begin{enumerate}[$\bullet$]
		               \item The outcome is not one of $1,\cdots,k$ (that is, the outcome corresponds to $\mathbb{I}-\sum_{i=1}^{k}\ket{a_i}\bra{a_i}$). So we can exclude $k$ states.
		               
		               \item The outcome is one of $1, \cdots, k$  (that is, the outcome corresponds to  one of $\{\ket{a_i}\bra{a_i} \}_{i=1}^{k}$). Without loss of generality, we  assume that the outcome is $k$. Then we can exclude the following $(k-1)$ states $$\ket{a_1}\otimes\ket{b_1}, \cdots, \ket{a_{k-1}}\otimes\ket{b_{k-1}}.$$ What's more, we consider the following two cases:
		               \begin{enumerate}[$\bullet$]
		                   \item If\ $\exists\ j> k$, $s.t.$ $\braket{a_k}{a_j}=0$, then we can also exclude $\ket{a_j}\otimes\ket{b_j}$ via the outcome $k$.
		                   
		                   \item If\ $\forall\ j> k$,\ $\braket{a_k}{a_j}\neq0$, then we must have $\braket{b_k}{b_j}=0$. Especially, we have $\braket{b_k}{b_{k+1}}=0$. Then the second system could take the  local measurement $\{ \ket{b_k}\bra{b_k},  \ket{b_{k+1}}\bra{b_{k+1}} \} \cup\{ {\mathbb{I}-\sum_{i=k}^{k+1}\ket{b_i}\bra{b_i}}\}.$  Each outcome of this measurement  could  exclude  one more state.  Hence, we  could  exclude at least $k$ states totally. 
		              \end{enumerate}
		         \end{enumerate}
		    
		     \item $M= k+1-n ( n\in\mathbb{N} ).$  Notice that $M=k+1-n\geq m$. Therefore, we only need to consider the cases:  $0\leq n\leq k-m+1$. That is, there  are  only $k+1-n$($n\in\mathbb{N}$) states in $A$ or in $B$ which are orthogonal to each other. Without loss of generality, let $\{\ket{a_i}\}_{i=1}^{k+1-n}$ be an orthogonal set.  Now we will prove our statement by induction on $n$. Notice that  the cases $n=0$ and $n=1$       have been   proved  by the previous one case. Now suppose that the result is true for $n\leq N$(where $N+1\leq k-m+1$), i.e., when there are at least $k+1-N$ states in $A$ or in $B$ which are orthogonal to each other, a single copy   is sufficient to exclude at least $k$ states via  LOCC protocols (inductive assumption). 
       
       In the following, we will show that our statement is true   for $n=N+1$. That is,  when there are only  $k+1-(N+1)=k-N$ states in $A$ or in $B$ which are orthogonal to each other, a single copy   is sufficient to exclude at least $k$ states via  LOCC protocols. Without loss of generality, let $\{\ket{a_i}\}_{i=1}^{k-N}$ be an orthogonal set. Taking the local measurement $\{\ket{a_i}\bra{a_i} \}_{i=1}^{k-N}\cup\{ {\mathbb{I}-\sum_{i=1}^{k-N}\ket{a_i}\bra{a_i}} \}$, we have two different possible outcomes: 
		       \begin{enumerate}[(i)]
		           \item The outcome is not among $\{1, \cdots, k-N\}$. Then we exclude $(k-N)$ states $\ket{a_1}\otimes\ket{b_1}, \cdots, \ket{a_{k-N}}\otimes\ket{b_{k-N}}$. The condition 
		           \[R(m,m)-(k-N)\geq R(k-N+1,N+1),\]
		           implies that the number of the remaining states is not less than $R(k-N+1,N+1)$. Regarding the remaining $R(m,m)-(k-N)$ states $\{\ket{a_i}\otimes\ket{b_i}\}_{i=k-N+1}^{R(m,m)}$ as an edge 2-coloring $K_{R(m,m)-(k-N)}$,  by  the definition of Ramsey number, we know that the edge 2-coloring $K_{R(m,m)-(k-N)}$ must contain either a red $K_{k-N+1}$ or a blue $K_{N+1}$.  Therefore,  there must be either $k-N+1$ states in $\{\ket{a_i}\}_{i=k-N+1}^{R(m,m)}$ or $N+1$ states in $\{\ket{b_i}\}_{i=k-N+1}^{R(m,m)}$ which are orthogonal to each other. For the first setting, by the inductive assumption,  we know that our statement is true. For the latter setting, without loss of generality, let $\{\ket{b_i}\}_{i=k-N+1}^{k+1}$ be an orthogonal set. Taking the local measurement for the second subsystem $\{ \ket{b_i}\bra{b_i}\}_{i=k-N+1}^{k+1}\cup\{  {\mathbb{I}-\sum_{i=k-N+1}^{k+1}\ket{b_i}\bra{b_i}}\}, $ we can   exclude at least $N$ more states for each possible outcome.  Hence, we could exclude at least $k-N+N=k$ states in total.
		           \item The outcome is one of $1, \cdots, k-N$. Without loss of generality, we  assume that the outcome is $k-N$. So we can exclude the $k-N-1$ states $\ket{a_1}\otimes\ket{b_1}, \cdots, \ket{a_{k-N-1}}\otimes\ket{b_{k-N-1}}$. Furthermore, we consider the following different cases:
		           \begin{enumerate}[$\bullet$]
		                
                  \item If there are only $s$ (where $1\leq s\leq N+1$) states in $\{\ket{a_i}\}_{i=k-N+1}^{R(m,m)}$ which are orthogonal to $\ket{a_{k-N}}$, without loss of generality, let \[\braket{a_{k-N}}{a_{k-N+1}}=\cdots=\braket{a_{k-N}}{a_{k-N+s}}=0.\]
                  Then for the outcome $k-N$, we can  also exclude $s$ more states $$\ket{a_{k-N+1}}\otimes\ket{b_{k-N+1}}, \cdots, \ket{a_{k-N+s}}\otimes\ket{b_{k-N+s}}.$$ 
                  Moreover,   setting $t=k-m-(N+1-s)$ in inequality \eqref{eq:ineqcondition}, we have 
		               \[R(m,m)-(k-N-1+s)\geq R(k-N+s, N+2-s).\]
		              This  implies that the number of the remaining states $\{\ket{a_{k-N}}\otimes\ket{b_{k-N}},\ket{a_{i}}\otimes\ket{b_{i}}\}_{i=k-N+s+1}^{R(m,m)}$ is not less than $R(k-N+s, N+2-s)$. By the definition of Ramsey number,   there are either $k-N+s$ states in $\{\ket{a_{k-N}},\ket{a_{i}}\}_{i=k-N+s+1}^{R(m,m)}$ or $N+2-s$ states in $\{\ket{b_{k-N}},\ket{b_{i}}\}_{i=k-N+s+1}^{R(m,m)}$ that orthogonal to each other.  For the first setting, as $s\geq 1$, by the inductive assumption,  we know that our statement is true. For the latter setting, without loss of generality, let $\{|b_{k-N}\rangle\}\cup \{\ket{b_i}\}_{i=k-N+s+1}^{k+1}$ be an orthogonal set. Taking the local measurement  $$\{ \ket{b_i}\bra{b_i}\}_{i=k-N+s+1}^{k+1}\cup\{  { \ket{b_{k-N}}\bra{b_{k-N}},\mathbb{I}- \ket{b_{k-N}}\bra{b_{k-N}}-\sum_{i=k-N+s+1}^{k+1}\ket{b_i}\bra{b_i}}\} $$ 
                for the second subsystem,   we can   exclude at least $(N+1-s)$ more states for each possible outcome.  Hence, we could exclude at least $(k-N-1)+s+(N+1-s)=k$ states in total.

		              \item If there is no state in $\{\ket{a_i}\}_{i=k-N+1}^{R(m,m)}$ that is orthogonal to $\ket{a_{k-N}}$. Then for $\forall\ k-N+1\leq i\leq R(m,m)$ we have $\braket{b_{k-N}}{b_i}=0$, and from the condition
		              \[R(m,m)-(k-N)\geq R(k-N+1,N+1),\]
		              there must be $k-N+1$ states in $\{\ket{a_i}\}_{i=k-N+1}^{R(m,m)}$ or $N+1$ states in $\{\ket{b_i}\}_{i=k-N+1}^{R(m,m)}$ which are orthogonal to each other. For the first setting, by the inductive assumption,  we know that our statement is true. For the latter setting, without loss of generality, let $\{\ket{b_i}\}_{i=k-N+1}^{k+1}$ be an orthogonal set, that is, $\{\ket{b_i}\}_{i=k-N}^{k+1}$ is an orthogonal set. Then taking the local measurement $\{ \ket{b_i}\bra{b_i}\}_{i=k-N}^{k+1}\cup\{  {\mathbb{I}-\sum_{i=k-N}^{k+1}\ket{b_i}\bra{b_i}}\}$  for the second subsystem, it is sufficient to exclude at least $(N+1)$ more states for each possible outcome. Hence, we can  exclude $(k-N-1)+(N+1)=k$ states totally.

		       \end{enumerate}
		     \end{enumerate}
		     By induction we can conclude that the result is true for all cases.

         \end{enumerate}
		    \end{proof} 
		    \vskip 15pt
		    Note that for any positive integer $N\geq R(m,m)$ and $k\geq m$, if $m$ and $k$ satisfy the condition of \autoref{Theorem1}, then a single copy is sufficient to exclude $k$ states when distinguishing any $N$ orthogonal bipartite product states via LOCC protocols. We can see that this statement is true, just by considering $R(m,m)$ states among $N$ states. Similarly in the rest of the paper, any statement like ``for any $x$ states $\cdots$ to exclude $y$ states $\cdots$" just implies  that ``for any $N$($N\geq x$) states $\cdots$ to exclude $y$ states $\cdots$", such as \autoref{Corollary1}, \autoref{Theorem3} and \autoref{Corollary2}.
      
            And from \autoref{Theorem1}, it's easy to prove the following corollary using the known results of 2-color Ramsey number. 
		    \vskip 15pt
		    \begin{corollary}\label{Corollary1}
		       (1) For any $R(4,4)=18$ orthogonal bipartite product states, a single copy   is sufficient to exclude  $6$ states via  LOCC protocols; (2) For any $R(m,m)$ orthogonal bipartite product states$(m\geq5, m\in\mathbb{N^{+}})$, a single copy   is sufficient to exclude  $m+3$ states via  LOCC protocols.
		    \end{corollary}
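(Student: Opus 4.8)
The plan is to obtain both statements as immediate applications of \autoref{Theorem1}: in each part I fix the exclusion number $k$, so that the range $0\le t\le k-m$ is finite, and then verify the inequalities \eqref{eq:ineqcondition} one by one by substituting known two-colour Ramsey numbers and standard Ramsey bounds.

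For part (1) I set $m=4$ and $k=6$, so that $R(m,m)=R(4,4)=18$ and $t$ runs over $0,1,2$. The three instances of \eqref{eq:ineqcondition} are $18-4\ge R(5,3)$, $18-5\ge R(6,2)$ and $18-6\ge R(7,1)$. Inserting the tabulated value $R(5,3)=R(3,5)=14$ together with the boundary values $R(n,2)=n$ and $R(n,1)=1$, these read $14\ge 14$, $13\ge 6$ and $12\ge 1$, all of which hold; only the first is tight, and it is precisely where the exact value $R(3,5)=14$ enters. \autoref{Theorem1} then yields the conclusion.

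For part (2) I set $k=m+3$, so $t\in\{0,1,2,3\}$ and \eqref{eq:ineqcondition} becomes $R(m,m)-(m+t)\ge R(m+1+t,\,4-t)$. The cases $t=3$ and $t=2$ reduce to the linear lower bounds $R(m,m)\ge m+4$ and $R(m,m)\ge 2m+5$, which are dominated by the growth of the diagonal numbers: I would use $R(m,m)>2^{m/2}$ from \autoref{Prop:lowerbound} for large $m$ and tabulated values for the few small $m$. The case $t=1$ asks for $R(m,m)-(m+1)\ge R(m+2,3)$; here I would bound the right-hand side by \autoref{Proposition6} with $r=1$, giving $R(m+2,3)\le (m+2)^2$, and use the exact table entries for $R(3,m+2)$ when $m$ is small.

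The real obstacle is the case $t=0$, i.e. $R(m,m)-m\ge R(m+1,4)$, since the off-diagonal number $R(m+1,4)$ grows fastest among the right-hand sides. My plan is to bound it from above by iterating the recursion of \autoref{Prop:recursive}, $R(j,4)\le R(j-1,4)+R(j,3)$, down to $R(4,4)=18$ while inserting $R(j,3)\le j^2$ from \autoref{Proposition6}; this yields a cubic estimate $R(m+1,4)=O(m^3)$. Measured against the exponential lower bound $R(m,m)>2^{m/2}$, this settles every sufficiently large $m$, and the finitely many remaining values of $m$ are then checked directly against the best known entries for $R(m,m)$ and $R(m+1,4)$ collected in Ref.~\cite{radziszowski2011small}. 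That finite verification---matching the tabulated lower bounds on $R(m,m)$ against the upper bounds on $R(m+1,4)$---is the delicate step, and is the one I would carry out in full.
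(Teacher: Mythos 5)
Your plan follows the paper's proof almost step for step. Part (1) is verified identically (same three inequalities, with the tight case $18-4\ge R(3,5)=14$). For part (2) you use the same skeleton: dispose of $t=2,3$ by comparing the exponential lower bound of \autoref{Prop:lowerbound} with linear quantities, bound $R(m+2,3)$ quadratically, bound $R(m+1,4)$ cubically via the recursion of \autoref{Prop:recursive}, and close large $m$ with $R(m,m)>2^{m/2}$. The only methodological difference is minor: you invoke \autoref{Proposition6} with $r=1$ to get $R(m+2,3)\le (m+2)^2$ and iterate the recursion down to $R(4,4)$ with $R(j,3)\le j^2$ inserted, whereas the paper iterates \autoref{Prop:recursive} alone and obtains the sharper bounds $\tfrac12m^2+\tfrac52m+2$ and $\tfrac16m^3+m^2+\tfrac56m-5$. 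Your weaker constants push the asymptotic crossover from the paper's $m\ge 23$ to roughly $m\ge 26$, which only enlarges the finite range to be checked; nothing breaks in principle there.

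The genuine gap is precisely the step you defer: the finite table verification for the $t=0$ inequality $R(m,m)-m\ge R(m+1,4)$, which is where all the content of part (2) lies. Concretely, at $m=5$ that verification cannot be completed from the tables you cite: you need $R(5,5)\ge R(6,4)+5$, but \autoref{table1} gives only $R(5,5)\ge 43$ and $R(4,6)\le 41$, so the comparison $43$ versus $46$ is inconclusive (the inequality would actually fail if $R(5,5)=43$ and $R(4,6)=41$, which the known bounds allow). In fairness, the paper's own check suffers from the same defect at $m=5$, since its generic bound gives $R(6,4)+5\le 50$ against the same lower bound $43$; but a blind "check the finitely many cases against the tables" promise does not close this case, and your write-up treats it as routine. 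For every other value, $6\le m\le 25$, your plan does go through pointwise: e.g.\ $R(6,6)\ge 102 \ge 67 \ge R(4,7)+6$ from \autoref{table1}, and for $m\ge 15$ the lower bounds of \autoref{table2} (with monotonicity of $R(m,m)$ for $m=24,25$) dominate your cubic estimate. So the verification you flagged as "the delicate step" must actually be carried out, and when you do, you will find it succeeds everywhere except $m=5$, where a genuinely different argument (or an exclusion of that case from the statement) is required.
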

		    \begin{proof}
		       Just need to check the relevant inequalities of \autoref{Theorem1}.
		       \begin{enumerate}[(1)]
		           \item Set $m=4, k=6.$ We only need to check directly
		              \begin{align*}
		                  R(4,4)-4 &\geq R(5,3),\\
		                  R(4,4)-5 &\geq R(6,2),\\
		                  R(4,4)-6 &\geq R(7,1)
		              \end{align*}
		              from the known values of the 2-color Ramsey number(see \autoref{table1}).
		           \item Set $k=m+3$. We only need to check
		              \begin{equation*}
                      \begin{array}{l}
                         R(m,m)-m\geq R(m+1,4),\\
                         R(m,m)-(m+1)\geq R(m+2,3),\\
                         R(m,m)-(m+2)\geq R(m+3,2),\\
                         R(m,m)-(m+3)\geq R(m+4,1).
                      \end{array}
                      \end{equation*}
                      \begin{enumerate} 
                          \item  The last two inequities. By \autoref{Prop:lowerbound}, we have  $R(m,m)>2^{\frac{m}{2}}$.  It's easy to check that $$2^{\frac{m}{2}}>2m+7$$ when $m\geq10$. As $R(m+3,2)=m+3,R(m+4,1)=m+4,$ so  the last two inequities are  satisfied when $m\geq 10$. For the other cases, the two inequalities are also satisfied by checking  the values in  \autoref{table1}.
                          \item Now we consider the inequality $R(m,m)-(m+1)\geq R(m+2,3)$.  By \autoref{Prop:recursive}, we know  that $R(m,n)\leq R(m-1,n)+R(m,n-1)$.  So we have
                             \begin{align*}
                                R(m+2,3)&\leq R(m+1,3)+R(m+2,2)\\
                                        &=R(m+1,3)+m+2\\
                                        &\leq R(m,3)+m+1+m+2\\
                                        &\leq \cdots\\
                                        &\leq R(5,3)+\frac{(6+m+2)(m-3)}{2}\\
                                        &=\frac{1}{2}m^2+\frac{5}{2}m+2 \text{.}
                            \end{align*}\\
                          Therefore, $R(m+2,3)+m+1\leq\frac{1}{2}m^2+\frac{7}{2}m+3$. Notice that $R(m,m)>2^{\frac{m}{2}}$, it's easy to show that when $m\geq15$ we have
                          \[R(m,m)>2^{\frac{m}{2}}>\frac{1}{2}m^2+\frac{7}{2}m+3\geq R(m+2,3)+m+1.\]
                          For $5\leq m\leq 14$, we can  show that the inequality $R(m,m)-(m+1)\geq R(m+2,3)$   also holds by comparing the known lower bound of $R(m,m)$ with the known upper bound of the corresponding $R(m+2,3)$, see \autoref{table1}, \autoref{table2} and \autoref{table3}.
                        \item Now we consider the 
 first inequality $R(m,m)-m\geq R(m+1,4)$. Similarly, from the fact that $R(m,n)\leq R(m-1,n)+R(m,n-1)$,  we  obtain that
                           \begin{align*}
                              R(m+1,4)&\leq R(m+1,3)+R(m,4)\\
                                      &\leq\frac{(6+m+1)(m-4)}{2}+R(5,3)+R(m,4)\\
                                      &\leq\frac{1}{2}m^2+\frac{3}{2}m+R(m,4).
                           \end{align*}
                        So we have 
                           \begin{align*}
                              R(m+1,4) &\leq\frac{1}{2}m^2+\frac{3}{2}m+R(m,4)\\
                                       &\leq\cdots\\
                                       &\leq \frac{1}{2}(m^2+\cdots+5^2)+\frac{3}{2}(m+\cdots+5)+R(5,4)\\
                                       &=\frac{1}{6}m^3+m^2+\frac{5}{6}m-5 \text{.}
                           \end{align*}\\
                        Hence, we have $R(m+1,4)+m\leq\frac{1}{6}m^3+m^2+\frac{11}{6}m-5$.    It can be shown that  the inequality \begin{equation*}
                 R(m,m)>2^{\frac{m}{2}}>\frac{1}{6}m^3+m^2+\frac{11}{6}m-5\geq R(m+1,4)+m \end{equation*} holds when $m\geq23$.
                        For $5\leq m\leq22$, the known lower bounds of $R(m,m)$ which are illustrated in \autoref{table2} show that: when $14\leq m\leq 22$, $R(m,m)\geq R(14,14)>\frac{1}{6}\cdot23^3+23^2+\frac{11}{6}\cdot23-5$. It  implies that the inequity $R(m,m)-m\geq R(m+1,4)$ holds. For $5\leq m\leq 13$, we just need to compare the known lower bound of $R(m,m)$ with the upper bound $\frac{1}{6}m^3+m^2+\frac{11}{6}m-5$ of the corresponding $R(m+1,4)+m$(see \autoref{table1} and \autoref{table2}).
                        \end{enumerate}
                \end{enumerate}
		    \end{proof}
		    
\begin{table}[H]
\centering
\scalebox{1}{
\begin{tabular}{|c|c|c|c|c|c|c|c|c|c|c|c|c|c|}
\hline
\diagbox{k}{l}   & 3          & 4           & 5                                               & 6                                                 & 7                                                 & 8                                                  & 9                                                  & 10                                                                     & 11                                                  & 12                                                 & 13                                                  & 14                                                 & 15                                                  \\ \hline
3  & \textbf{6} & \textbf{9}  & \textbf{14}                                     & \textbf{18}                                       & \textbf{23}                                       & \textbf{28}                                        & \textbf{36}                                        & {
	 \begin{tabular}[c]{@{}c@{}}40\\ 42\end{tabular}} & \begin{tabular}[c]{@{}c@{}}47\\ 50\end{tabular}     & \begin{tabular}[c]{@{}c@{}}53\\ 59\end{tabular}    & \begin{tabular}[c]{@{}c@{}}60\\ 68\end{tabular}     & \begin{tabular}[c]{@{}c@{}}67\\ 77\end{tabular}    & \begin{tabular}[c]{@{}c@{}}74\\ 87\end{tabular}     \\ \hline
4  &            & \textbf{18} & \textbf{25}                                     & \begin{tabular}[c]{@{}c@{}}36\\ 41\end{tabular}   & \begin{tabular}[c]{@{}c@{}}49\\ 61\end{tabular}   & \begin{tabular}[c]{@{}c@{}}59\\ 84\end{tabular}    & \begin{tabular}[c]{@{}c@{}}73\\ 115\end{tabular}   & \begin{tabular}[c]{@{}c@{}}92\\ 149\end{tabular}                       & \begin{tabular}[c]{@{}c@{}}102\\ 191\end{tabular}   & \begin{tabular}[c]{@{}c@{}}128\\ 238\end{tabular}  & \begin{tabular}[c]{@{}c@{}}138\\ 291\end{tabular}   & \begin{tabular}[c]{@{}c@{}}147\\ 349\end{tabular}  & \begin{tabular}[c]{@{}c@{}}158\\ 417\end{tabular}   \\ \hline
5  &            &             & \begin{tabular}[c]{@{}c@{}}43\\ 48\end{tabular} & \begin{tabular}[c]{@{}c@{}}58\\ 87\end{tabular}   & \begin{tabular}[c]{@{}c@{}}80\\ 143\end{tabular}  & \begin{tabular}[c]{@{}c@{}}101\\ 216\end{tabular}  & \begin{tabular}[c]{@{}c@{}}133\\ 316\end{tabular}  & \begin{tabular}[c]{@{}c@{}}149\\ 442\end{tabular}                      & \begin{tabular}[c]{@{}c@{}}183\\ 633\end{tabular}   & \begin{tabular}[c]{@{}c@{}}203\\ 848\end{tabular}  & \begin{tabular}[c]{@{}c@{}}233\\ 1138\end{tabular}  & \begin{tabular}[c]{@{}c@{}}267\\ 1461\end{tabular} & \begin{tabular}[c]{@{}c@{}}275\\ 1878\end{tabular}  \\ \hline
6  &            &             &                                                 & \begin{tabular}[c]{@{}c@{}}102\\ 165\end{tabular} & \begin{tabular}[c]{@{}c@{}}115\\ 298\end{tabular} & \begin{tabular}[c]{@{}c@{}}134\\ 495\end{tabular}  & \begin{tabular}[c]{@{}c@{}}183\\ 780\end{tabular}  & \begin{tabular}[c]{@{}c@{}}204\\ 1171\end{tabular}                     & \begin{tabular}[c]{@{}c@{}}262\\ 1804\end{tabular}  & \begin{tabular}[c]{@{}c@{}}294\\ 2566\end{tabular} & \begin{tabular}[c]{@{}c@{}}347\\ 3703\end{tabular}  & \begin{tabular}[c]{@{}c@{}}L\\ 5033\end{tabular}   & \begin{tabular}[c]{@{}c@{}}401\\ 6911\end{tabular}  \\ \hline
7  &            &             &                                                 &                                                   & \begin{tabular}[c]{@{}c@{}}205\\ 540\end{tabular} & \begin{tabular}[c]{@{}c@{}}219\\ 1031\end{tabular} & \begin{tabular}[c]{@{}c@{}}252\\ 1713\end{tabular} & \begin{tabular}[c]{@{}c@{}}292\\ 2826\end{tabular}                     & \begin{tabular}[c]{@{}c@{}}405\\ 4553\end{tabular}  & \begin{tabular}[c]{@{}c@{}}417\\ 6954\end{tabular} & \begin{tabular}[c]{@{}c@{}}511\\ 10578\end{tabular} & \begin{tabular}[c]{@{}c@{}}L\\ 15263\end{tabular}  & \begin{tabular}[c]{@{}c@{}}L\\ 22112\end{tabular}   \\ \hline
8  &            &             &                                                 &                                                   &                                                   & \begin{tabular}[c]{@{}c@{}}282\\ 1870\end{tabular} & \begin{tabular}[c]{@{}c@{}}329\\ 3583\end{tabular} & \begin{tabular}[c]{@{}c@{}}343\\ 6090\end{tabular}                     & \begin{tabular}[c]{@{}c@{}}457\\ 10630\end{tabular} & \begin{tabular}[c]{@{}c@{}}L\\ 16944\end{tabular}  & \begin{tabular}[c]{@{}c@{}}817\\ 27485\end{tabular} & \begin{tabular}[c]{@{}c@{}}L\\ 41525\end{tabular}  & \begin{tabular}[c]{@{}c@{}}873\\ 63609\end{tabular} \\ \hline
9  &            &             &                                                 &                                                   &                                                   &                                                    & \begin{tabular}[c]{@{}c@{}}565\\ 6588\end{tabular} & \begin{tabular}[c]{@{}c@{}}581\\ 12677\end{tabular}                    & \begin{tabular}[c]{@{}c@{}}L\\ 22325\end{tabular}   & \begin{tabular}[c]{@{}c@{}}L\\ 38832\end{tabular}  & \begin{tabular}[c]{@{}c@{}}L\\ 64864\end{tabular}   & \begin{tabular}[c]{@{}c@{}}\\ \end{tabular}      & \begin{tabular}[c]{@{}c@{}}\\ \end{tabular}       \\ \hline
10 &            &             &                                                 &                                                   &                                                   &                                                    &                                                    & \begin{tabular}[c]{@{}c@{}}798\\ 23556\end{tabular}                    & \begin{tabular}[c]{@{}c@{}}L\\ 45881\end{tabular}   & \begin{tabular}[c]{@{}c@{}}L\\ 81123\end{tabular}  & \begin{tabular}[c]{@{}c@{}}\\ \end{tabular}       & \begin{tabular}[c]{@{}c@{}}\\ \end{tabular}      & \begin{tabular}[c]{@{}c@{}}1313\\ U\end{tabular}    \\ \hline
\end{tabular}
}
\caption{Known values(bold), lower bounds(above) and upper bounds(below) of $R(k,l)$ for $k\leq10$,\ $l\leq15$ \cite{radziszowski2011small}.}
\label{table1}
\end{table}

\vskip10pt

\begin{table}[H]
\centering
\scalebox{1}{
\begin{tabular}{|c|c|c|c|c|c|c|c|c|c|c|c|c|c|}
\hline
k           & 11   & 12   & 13   & 14   & 15   & 16   & 17   & 18    & 19    & 20    & 21    & 22    & 23    \\ \hline
lower bound & 1597 & 1640 & 2557 & 2989 & 5485 & 5605 & 8917 & 11005 & 17885 & 21725 & 30925 & 39109 & 49421 \\ \hline
\end{tabular}
}
\caption{Known lower bounds of $R(k,k)$ for\ $k\geq11$\cite{radziszowski2011small}.}
\label{table2}
\end{table}

\vskip10pt

\begin{table}[H]
\centering
\scalebox{1}{
\begin{tabular}{|c|c|c|c|c|c|c|c|c|c|}
\hline
\diagbox{k}{l}& 15                                              & 16                                              & 17                                               & 18                                               & 19                                                & 20                                                & 21                                                & 22                                                & 23                                                \\ \hline
3 & \begin{tabular}[c]{@{}c@{}}74\\ 87\end{tabular} & \begin{tabular}[c]{@{}c@{}}82\\ 97\end{tabular} & \begin{tabular}[c]{@{}c@{}}92\\ 109\end{tabular} & \begin{tabular}[c]{@{}c@{}}99\\ 120\end{tabular} & \begin{tabular}[c]{@{}c@{}}106\\ 132\end{tabular} & \begin{tabular}[c]{@{}c@{}}111\\ 145\end{tabular} & \begin{tabular}[c]{@{}c@{}}122\\ 157\end{tabular} & \begin{tabular}[c]{@{}c@{}}131\\ 171\end{tabular} & \begin{tabular}[c]{@{}c@{}}139\\ 185\end{tabular} \\ \hline
4 & \textbf{158}                                    & \textbf{170}                                    & \textbf{200}                                     & \textbf{205}                                     & \textbf{213}                                      & \textbf{234}                                      & \textbf{242}                                      & \textbf{314}                                      &                                                   \\ \hline
5 & \textbf{275}                                    & \textbf{293}                                    & \textbf{388}                                     & \textbf{396}                                     & \textbf{411}                                      & \textbf{424}                                      & \textbf{441}                                      & \textbf{492}                                      & \textbf{521}                                      \\ \hline
6 & \textbf{401}                                    & \textbf{434}                                    & \textbf{548}                                     & \textbf{614}                                     & \textbf{710}                                      & \textbf{878}                                      & \textbf{888}                                      & \textbf{1070}                                     &                                                   \\ \hline
7 &                                                 & \textbf{629}                                    & \textbf{729}                                     & \textbf{797}                                     & \textbf{908}                                      &                                                   & \textbf{1214}                                     &                                                   &                                                   \\ \hline
8 & \textbf{873}                                    &                                                 & \textbf{1005}                                    & \textbf{1049}                                    & \textbf{1237}                                     &                                                   & \textbf{1617}                                     &                                                   &                                                   \\ \hline
\end{tabular}
}
\caption{bounds of $R(k,l)$, Lower bounds(above) and upper bounds(below) are given for $k=3$, only lower bounds(bold) for $k\geq4$ \cite{radziszowski2011small}.}
\label{table3}
\end{table}
            
		  From our results,   it is possible to exclude more than 4 states via LOCC protocols with a single copy  of  $N$    orthogonal bipartite product states $\{\ket{a_i}\otimes\ket{b_i}\}_{i=1}^{N}$    when  $N$ is  large enough.  Before going further, we give the following definition.
		  \begin{definition}\label{Definition1}
		     For positive integers $N,\ r\geq2$, we define $f_r(N)$ as the least number of copies  which ensures the LOCC distinguishability of any $N$ orthogonal $r$-partite product states. 
		  \end{definition} 
		 So $f_r(N)$ is the unique positive integer that satisfies the following two conditions:
		     \begin{enumerate}[(i)]
		         \item any $N$ orthogonal $r$-partite product states are LOCC distinguishable provided  $f_r(N)$ copies of given states;
		         \item there exist $N$ orthogonal $r$-partite product states which are   LOCC indistinguishable even provided  $f_r(N)-1$ copies of given states.
		     \end{enumerate}  
		   Shu   \cite{shu2021locality} obtained an upper bound of $f_r(N)$. Especially,   he proved  that $f_2(N)\leq \lceil\frac{N}{4}\rceil$. In the remainder of this section, we dedicate to obtain a smaller upper bound of $f_2(N)$ for large $N$ by applying our results obtained above.
		 \begin{theorem}\label{Theorem3}
		    $f_2(N)\leq\lceil\frac{N}{6}\rceil+2$.
		 \end{theorem}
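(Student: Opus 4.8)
The plan is to prove the bound by strong induction on $N$, using \autoref{Corollary1}(1) as the engine that drives the recursion and Shu's already-known estimate $f_2(N)\le\lceil N/4\rceil$ \cite{shu2021locality} to settle the finitely many base cases. First I would record the elementary arithmetic fact that $\lceil N/4\rceil\le\lceil N/6\rceil+2$ holds for every $1\le N\le 17$ (with equality at $N=17$). This immediately disposes of the base cases $1\le N\le 17$: there $f_2(N)\le\lceil N/4\rceil\le\lceil N/6\rceil+2$, so the asserted inequality holds at the bottom of the induction.

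For the inductive step I would fix $N\ge 18$ and assume the claim for all smaller cardinalities. By \autoref{Corollary1}(1), together with the remark following \autoref{Theorem1} that ``for any $x$ states $\dots$'' extends to ``for any $N\ge x$ states $\dots$'', a single copy of the unknown state suffices to exclude at least $6$ of the $N$ candidate states via an LOCC protocol. After performing this one-copy measurement at most $N-6$ candidates remain, and they still constitute an orthogonal bipartite product set, so distinguishing among them costs at most $f_2(N-6)$ further copies. This yields the recursion $f_2(N)\le 1+f_2(N-6)$. Applying the induction hypothesis to $N-6\ge 12$ gives $f_2(N-6)\le\lceil (N-6)/6\rceil+2=\lceil N/6\rceil+1$, whence $f_2(N)\le\lceil N/6\rceil+2$, closing the induction. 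Note that the recursion always lands in the verified window $\{12,\dots,17\}$, since $18\le N\le 23$ reduces directly into it and larger $N$ reduces into it in steps of $6$.

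I expect the one genuinely delicate point to be the justification of the recursion $f_2(N)\le 1+f_2(N-6)$, namely that the single ``excluding'' copy may be spent first and that the distinguishing protocol for the surviving states is then run on the remaining copies of the same unknown state. This rests on the monotonicity of $f_2$ (a set of at most $N-6$ orthogonal product states is no harder to distinguish than a worst-case set of exactly $N-6$ such states) and on the fact that exclusion merely deletes candidates without disturbing the orthogonality structure recorded by the edge $2$-coloring of the survivors. Everything else is bookkeeping: confirming the ceiling inequality on the base window and tracking the residues modulo $6$. As a consistency check, the gap $\lceil N/4\rceil-(\lceil N/6\rceil+2)$ equals roughly $N/12-2$ and grows without bound, which is exactly the improvement over Shu's bound for large $N$ advertised in the abstract.
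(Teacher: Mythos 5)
Your proposal is correct and is essentially the paper's own argument: the paper writes $N=6q+r$ and applies \autoref{Corollary1}(1) iteratively, spending $q-2$ copies to exclude $6(q-2)$ states and finishing the remaining $12+r$ states with Shu's bound $\lceil\cdot/4\rceil$, which is exactly your recursion $f_2(N)\le 1+f_2(N-6)$ unrolled, with the same base window $N\le 17$ settled by $\lceil N/4\rceil\le\lceil N/6\rceil+2$. Your inductive packaging, and your explicit flagging of the monotonicity of $f_2$ under exclusion (which the paper uses tacitly), are presentational differences only.
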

		 \begin{proof}
		     First, by Shu's result, $f_2(N)\leq \lceil\frac{N}{4}\rceil.$ We can check that  $\lceil\frac{N}{4}\rceil\leq \lceil\frac{N}{6}\rceil+2$ when $N\leq 17.$ For every integer $N\geq 18,$  it can be written as 
       $$ N=6 q+r,\  q,r \in \mathbb{N},  q\geq 3, \text{and   }  0\leq r\leq 5.$$
By \autoref{Corollary1}, if $q\geq 3,$
         we can exclude 6 states via a single copy.  And so we
can use $q-2$ copies to exclude $6(q-2)$ states and left $ 12 + r$ states. To distinguish the remaining $12+r$ states,
$$\lceil\frac{12 + r}{4}\rceil$$
more copies are enough. Therefore,  the total number of copies needed are at most 
$$ (q-2)+\lceil\frac{12 + r}{4}\rceil\leq (q-2)+\lceil\frac{12 + r}{6}\rceil+2= q+\lceil\frac{  r}{6}\rceil+2=\lceil\frac{N}{6}\rceil+2.$$
		 \end{proof}

	 	\section{Cases for multipartite system}\label{sec:Multipatite}
	 	    In this section, we consider the more general $r$-partite system($r\geq2$). We claim that our results of bipartite system can be naturally generalized to $r$-partite case, by contemplating the correspondence between $N$ orthogonal $r$-partite product states and edge $r$-coloring $K_N$, which allows us to make use of the language of $r$-color Ramsey number, similar to the bipartite case.
	 	    
	 	    For convenience, we first clarify some notations used in the rest of the paper. 
	 	    \vskip 10pt
	 	    \paragraph*{Notations}
	 	    \begin{enumerate}[(1)]
	 	        \item “WLG” stands for “Without loss of generality”;
	 	        \item Let $S$ be an $r$-partite system which is composed of $r$ subsystems $S_1, \cdots, S_r$. Assume that the state of $S$ is prepared in one of $N$ orthogonal $r$-partite product states $\{\ket{\psi_i}=\ket{\psi_{i}^{1}}\otimes\cdots\otimes\ket{\psi_{i}^{r}}\}_{i=1}^{N}$, where $\ket{\psi_{i}^{j}}$ corresponds to the $j$-th subsystem $S_j$. We say the $k$ product states $\ket{\psi_{n_1}}, \cdots, \ket{\psi_{n_k}}$ are orthogonal to each other with respect to $S_j$ if $\ket{\psi_{n_1}^j}, \cdots, \ket{\psi_{n_k}^j}$ are orthogonal to each other;
	 	        \item If $\ket{\psi_{n_1}}, \cdots, \ket{\psi_{n_k}}$ are orthogonal to each other with respect to $S_j$, then we use symbol $M_{n_1,\cdots,n_k}^j$ to denote the local measurement \[\{\ket{\psi_{n_i}^j}\bra{\psi_{n_i}^j}\}_{i=1}^{k}\cup\{{\mathbb{I}-\sum_{i=1}^{k}\ket{\psi_{n_i}^j}\bra{\psi_{n_i}^j}}\}\]
	 	        on the $j$-th subsystem $S_j$.
	 	    \end{enumerate}
	 	    
	 	    Now we illuminate the correspondence between $N$ orthogonal $r$-partite product states and edge $r$-coloring $K_N$ in detail, as well as introducing the usage of the language of $r$-color Ramsey number, just as the bipartite case. Regard the $N$ orthogonal $r$-partite product states $\{\ket{\psi_i}=\ket{\psi_{i}^{1}}\otimes\cdots\otimes\ket{\psi_{i}^{r}}\}_{i=1}^{N}$ as the $N$ vertexes of a complete graph $K_N$, and if two states are orthogonal to each other with respect to the $j$-th subsystem $S_j$, then color the edge that connects the corresponding two vertexes by the $j$-th color. So each edge of the corresponding $K_N$ must be colored by one of the $r$ colors, that is, there is a correspondence between $\{\ket{\psi_i}=\ket{\psi_{i}^{1}}\otimes\cdots\otimes\ket{\psi_{i}^{r}}\}_{i=1}^{N}$ and edge $r$-coloring $K_N$. Then $N\geq R(i_1,\cdots,i_r)$ means that there exists $k\in\{1,\cdots,r\}$, s.t., $K_N$ contains a subgraph $K_{i_k}$ whose edges are all colored by the $k$-th color. That is, there exists a subsystem $S_k$, s.t., there exist $i_k$ states in $\{\ket{\psi_i}=\ket{\psi_{i}^{1}}\otimes\cdots\otimes\ket{\psi_{i}^{r}}\}_{i=1}^{N}$ which are orthogonal to each other with respect to $S_k$.
	 	    
	 	    With the statements above, we can naturally derive the $r$-partite versions of \autoref{Theorem1} and \autoref{Corollary1}. The techniques applied to prove the $r$-partite results are just similar to those used in the bipartite case.
	 	    
	 	    \begin{theorem}\label{Theorem3}
	 	        Given any $R_r(m)$ orthogonal $r$-partite product states and a positive integer $k\geq m$, if for any $ 0\leq t\leq k-m\ (t\in\mathbb{N})$ the following inequalities 
	 	        \begin{equation}\label{eq:ineqcondition2}R_r(m)-(m+t)\geq R(m+1+t,k-m+1-t,\cdots,k-m+1-t;\ r),\end{equation} are satisfied, then a single copy is sufficient to exclude at least $k$ states via LOCC protocols. 
	 	    \end{theorem}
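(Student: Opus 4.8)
The plan is to copy the proof of \autoref{Theorem1} almost verbatim, replacing the red/blue coloring of $K_{R(m,m)}$ by the $r$-coloring of $K_{R_r(m)}$ introduced above and every appeal to a $2$-color Ramsey number by the corresponding $r$-color number. First I would let $M_j$ be the largest number of the $R_r(m)$ states that are pairwise orthogonal with respect to $S_j$, set $M=\max_{j}M_j$, and note that the definition of $R_r(m)$ forces $M\geq m$; WLG the maximum is attained at $j=1$. This is precisely the asymmetry visible in \eqref{eq:ineqcondition2}: the first argument $m+1+t$ monitors color $1$, along which an over-large monochromatic clique is either impossible (it would exceed $M_1=M$) or feeds the induction, whereas the $r-1$ equal arguments $k-m+1-t$ monitor colors $2,\dots,r$, along which a monochromatic clique supplies a finishing local measurement $M^{j}_{\cdots}$. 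As in \autoref{Theorem1}, the instance $t=k-m$ gives $R_r(m)\geq k+1$, so there are always more than $k$ states to work with.

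First I would settle $M\geq k$ exactly as in case (1) of \autoref{Theorem1}: measure $M^{1}_{1,\dots,k}$ on the $k$ states pairwise orthogonal through $S_1$; an out-of-range outcome already excludes $k$ states, while the in-range outcome excludes $k-1$ and is completed either by one further state orthogonal to the outcome state through $S_1$ or by the two-element measurement $M^{j}_{k,k+1}$ on whichever $S_j$ ($j\geq2$) witnesses orthogonality of the outcome state and $\ket{\psi_{k+1}}$ --- only a single cross-orthogonal pair is needed here, so no new phenomenon appears. For $m\leq M\leq k$ I would then induct downward on $M$ (that is, on $n=k+1-M$), measuring the maximal $S_1$-family of size $M$. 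The out-of-range branch and the in-range branch with $s\geq1$ (here $s$ counts the surviving states orthogonal to the outcome state through $S_1$) transcribe directly: applying the definition of the Ramsey number to the surviving set, a color-$1$ clique is impossible or invokes the inductive hypothesis, while a clique in some color $j\geq2$ gives a measurement $M^{j}_{\cdots}$ every outcome of which removes the required number of states; the index $t=M+s-m-1$ in \eqref{eq:ineqcondition2} makes the counts balance exactly as before.

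The hard part will be the in-range sub-case with $s=0$, where the outcome state $\ket{\psi_M}$ is orthogonal through $S_1$ to none of the survivors. For $r=2$ this is rescued by the product-orthogonality dichotomy (an orthogonality not carried by $S_1$ is carried by $S_2$): then $\ket{\psi_M}$ is orthogonal through $S_2$ to every survivor, which for $M\leq k/2$ forces $M_2\geq k-M+1>M$ and drops the configuration to a smaller $n$ covered by the induction, and for $M>k/2$ enlarges the Ramsey-produced clique $K_{k-M+1}$ into a $K_{k-M+2}$ that finishes. For $r\geq3$ the subsystem realizing the orthogonality of $\ket{\psi_M}$ to a survivor may vary with the survivor, so no single color automatically supplies the extra orthogonal partner $\ket{\psi_M}$. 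The regime $M\leq k/2$ still goes through (any color-$j$ clique of the forced size $k-M+1>M$ yields $M_j>M$ and we recurse); the difficulty concentrates in $M>k/2$, where a finishing measurement is unavoidable. There I would pigeonhole the survivors by which color in $\{2,\dots,r\}$ carries their orthogonality to $\ket{\psi_M}$, keep the largest class $E_{j_0}$ --- on which $\ket{\psi_M}$ is orthogonal to all members through the single subsystem $S_{j_0}$ --- and re-apply Ramsey inside $E_{j_0}$, demanding a color-$j_0$ clique $K_{k-M+1}$ (to be augmented by $\ket{\psi_M}$) while forcing any clique in a color $\neq1,j_0$ to have size $k-M+2$ so that it closes by itself. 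The crux is then purely quantitative: one must check that this inner Ramsey threshold, inflated both by the factor $r-1$ lost to pigeonholing and by the raised demands, is still met by the candidate set of size $R_r(m)-(M-1)$ under the hypotheses \eqref{eq:ineqcondition2}. I expect this bookkeeping, rather than any conceptual step, to be where the real work lies, and it may force either a more globally balanced choice of the measured cliques or a mild strengthening of \eqref{eq:ineqcondition2}.
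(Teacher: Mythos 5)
Your transcription of \autoref{Theorem1} is faithful wherever the argument is routine --- the case $M\geq k$, the out-of-range branch, and the in-range branch with $s\geq1$ all match the paper's proof --- and you have correctly located the only place where something genuinely new happens for $r\geq 3$: the in-range sub-case $s=0$, where the outcome state $\ket{\psi_{k-N}}$ is orthogonal to no survivor with respect to $S_1$. But your treatment of that sub-case has a real gap. The pigeonhole-plus-inner-Ramsey plan needs the largest color class $E_{j_0}$, whose guaranteed size is only about $\bigl(R_r(m)-(k-N)\bigr)/(r-1)$, to exceed a Ramsey number whose arguments you have additionally inflated from $N+1$ to $N+2$ in every color other than $j_0$. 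Hypothesis \eqref{eq:ineqcondition2} supplies only $R_r(m)-(k-N)\geq R(k-N+1,N+1,\cdots,N+1;r)$, with no factor of $r-1$ to spare, so the ``purely quantitative crux'' you defer does not close; and your own fallback --- strengthening \eqref{eq:ineqcondition2} --- would prove a different theorem from the one stated.

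The paper closes this sub-case with no pigeonholing and no augmentation of the clique by $\ket{\psi_{k-N}}$. Apply Ramsey to the survivors exactly as in the out-of-range branch: either a color-$1$ clique $K_{k-N+1}$ appears (impossible, or covered by the induction, as you note), or some $q\geq2$ yields $N+1$ survivors, WLG $\{\ket{\psi_i^q}\}_{i=k-N+1}^{k+1}$, orthogonal with respect to $S_q$; measure $M_{k-N+1,\cdots,k+1}^q$. An out-of-range outcome excludes $N+1$ more states, total $k$. An in-range outcome, WLG $k+1$, excludes $N$ more, and the only remaining deficit is the ambiguity between the two \emph{outcome states} $\ket{\psi_{k-N}}$ and $\ket{\psi_{k+1}}$. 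If $\braket{\psi_{k+1}^q}{\psi_{k-N}^q}=0$, the outcome already excludes $\ket{\psi_{k-N}}$. Otherwise these two states, being mutually orthogonal product states, must be orthogonal with respect to some subsystem, and it is carried neither by $S_1$ (since $s=0$) nor by $S_q$; hence some third subsystem $S_v$, $v\neq1,q$, carries it, and the two-element measurement $M_{k+1,k-N}^v$ --- performed on a fresh, so-far-unmeasured subsystem --- excludes one of the pair in every branch, giving $(k-N-1)+N+1=k$ in total. So the asymmetric hypothesis \eqref{eq:ineqcondition2} suffices exactly as stated: the device you were missing is a third, adaptive, two-outcome measurement resolving the pair of outcome states, not a larger Ramsey count, and with it your case split $M\leq k/2$ versus $M>k/2$ becomes unnecessary.
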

	 	    
	 	    \begin{proof}
	 	          Denote the $R_r(m)$ orthogonal $r$-partite product states as $\{\ket{\psi_i}=\ket{\psi_{i}^{1}}\otimes\cdots\otimes\ket{\psi_{i}^{r}}\}_{i=1}^{R_r(m)}$, and assume that the unknown state from this set is $\ket{\psi}\in\{\ket{\psi_i}\}_{i=1}^{R_r(m)}$. Suppose that there are exactly $M_j$ states which are orthogonal to each other with respect to subsystem $S_j$, $j=1,\cdots,r$. Define $M=\max\{M_j\}_{j=1}^{r}$. By the definition of $R_r(m)$, there must be a subsystem $S_k$, s.t., the set $\{\ket{\psi_i}\}_{i=1}^{R_r(m)}$ must have $m$ states which are orthogonal to each other with respect to $S_k$. So we have $M\geq m$. Setting $t=k-m$ in inequality \eqref{eq:ineqcondition2}, we have $R_r(m)-k\geq R(k+1,1,\cdots,1;r)=1$, that is, $R_r(m)\geq k+1$. Therefore, the cardinality of the set $\{\ket{\psi_i}\}_{i=1}^{R_r(m)}$ is greater than $k+1$. Now we separate our proof into two cases according to the number $M$. 
	 	          \begin{enumerate}[(1)]
		           
		          \item $M\geq k.$ That is, there must be a subsystem $s.t.$ there are at least $k$ states which are orthogonal to each other with respect to that subsystem. WLG let $\{\ket{\psi_i^1}\}_{i=1}^{k}$ be an orthogonal set. Then taking $M_{1,\cdots,k}^1$, the outcome of the measurement have the following different possible cases:
		            \begin{enumerate}[$\bullet$]
		               \item The outcome is not one of $1,\cdots,k$ (that is, the outcome corresponds to $\mathbb{I}-\sum_{i=1}^{k}\ket{\psi_{i}^1}\bra{\psi_{i}^1}$). So we can exclude $k$ states.
		               
		               \item The outcome is one of $1, \cdots, k$  (that is, the outcome corresponds to  one of $\{\ket{\psi_i^1}\bra{\psi_i^1} \}_{i=1}^{k}$). WLG, we assume that the outcome is $k$. Then we can exclude the following $(k-1)$ states $$\ket{\psi_1},\cdots,\ket{\psi_{k-1}}.$$ What's more, we consider the following two cases:
		               \begin{enumerate}[$\bullet$]
		                   \item If\ $\exists\ j> k$, s.t., $\braket{\psi_k^1}{\psi_j^1}=0$, then we can also exclude $\ket{\psi_j}$ via the outcome $k$.
		                   
		                   \item If\ $\forall\ j> k$,\ $\braket{\psi_k^1}{\psi_j^1}\neq0$, then there must exist $s\in\{2,\cdots,r\}$, s.t., $\braket{\psi_k^s}{\psi_{k+1}^s}=0$. Then taking $M_{k,k+1}^s,$  each outcome of this measurement  could  exclude  one more state.  Hence, we  could  exclude at least $k$ states totally. 
		              \end{enumerate}
		         \end{enumerate}
		    
		     \item $M= k+1-n ( n\in\mathbb{N} ).$  Notice that $M=k+1-n\geq m$. Therefore, we only need to consider the cases:  $0\leq n\leq k-m+1$. That is, there must be a subsystem, s.t., there are $k+1-n$($n\in\mathbb{N}$) states which are orthogonal to each other with respect to that subsystem. WLG let $\{\ket{\psi_i^1}\}_{i=1}^{k+1-n}$ be an orthogonal set.  Now we will prove our statement by induction on $n$. Notice that  the cases $n=0$ and $n=1$       have been   proved  by the previous one case. Now suppose that the result is true for $n\leq N$(where $N+1\leq k-m+1$), i.e., when $M=\max\{M_j\}_{j=1}^r=k+1-N$, a single copy is sufficient to exclude at least $k$ states via  LOCC protocols (inductive assumption). 
       
       In the following, we will show that our statement is true   for $n=N+1$. That is, when $M=\max\{M_j\}_{j=1}^r=k+1-(N+1)=k-N$, a single copy is sufficient to exclude at least $k$ states via LOCC protocols. WLG let $\{\ket{\psi_i^1}\}_{i=1}^{k-N}$ be an orthogonal set. Taking $M_{1,\cdots,k-N}^1$, we have two different possible outcomes: 
		       \begin{enumerate}[(i)]
		           \item The outcome is not among $\{1, \cdots, k-N\}$. Then we exclude $(k-N)$ states $\ket{\psi_1}, \cdots, \ket{\psi_{k-N}}$. The condition 
		           \[R_r(m)-(k-N)\geq R(k-N+1,N+1,\cdots,N+1;r),\]
		           implies that the number of the remaining states is not less than $R(k-N+1,N+1,\cdots,N+1;r)$. Regarding the remaining $R_r(m)-(k-N)$ states $\{\ket{\psi_i}\}_{i=k-N+1}^{R_r(m)}$ as an edge r-coloring $K_{R_r(m)-(k-N)}$,  by  the definition of Ramsey number, we know that there must be either $k-N+1$ states which are orthogonal to each other with respect to $S_1$, or $\exists\ q\in\{2,\cdots,r\}$, s.t., there are $N+1$ states which are orthogonal to each other with respect to $S_q$. For the first setting, by the inductive assumption,  we know that our statement is true. For the latter setting, WLG let $\{\ket{\psi_i^q}\}_{i=k-N+1}^{k+1}$ be an orthogonal set. Taking $M_{k-N+1,\cdots,k+1}^q,$ we can  exclude at least $N$ more states for each possible outcome.  Hence, we could exclude at least $k-N+N=k$ states in total.
		           \item The outcome is one of $1, \cdots, k-N$. WLG assume that the outcome is $k-N$. So we can exclude the $k-N-1$ states $\ket{\psi_1}, \cdots, \ket{\psi_{k-N-1}}$. Furthermore, we consider the following different cases:
		           \begin{enumerate}[$\bullet$]
		                
                  \item If there are only $s$ (where $1\leq s\leq N+1$) states in $\{\ket{\psi_i^1}\}_{i=k-N+1}^{R_r(m)}$ which are orthogonal to $\ket{\psi_{k-N}^1}$, WLG let \[\braket{\psi_{k-N}^1}{\psi_{k-N+1}^1}=\cdots=\braket{\psi_{k-N}^1}{\psi_{k-N+s}^1}=0.\]
                  Then for the outcome $k-N$, we can  also exclude $s$ more states $$\ket{\psi_{k-N+1}}, \cdots,\ket{\psi_{k-N+s}}.$$ 
                  Moreover,   setting $t=k-m-(N+1-s)$ in inequality \eqref{eq:ineqcondition2}, we have 
		               \[R_r(m)-(k-N-1+s)\geq R(k-N+s, N+2-s,\cdots,N+2-s;r).\]
		              This  implies that the number of the remaining states $\{\ket{\psi_{k-N}},\ket{\psi_{i}}\}_{i=k-N+s+1}^{R_r(m)}$ is not less than $R(k-N+s, N+2-s,\cdots,N+2-s;r)$. By the definition of Ramsey number,   there are either $k-N+s$ states in $\{\ket{\psi_{k-N}},\ket{\psi_{i}}\}_{i=k-N+s+1}^{R_r(m)}$ which are orthogonal to each other with respect to $S_1$, or  $\exists\ q\in\{2,\cdots,r\}$, s.t., there are $N+2-s$ states in $\{\ket{\psi_{k-N}},\ket{\psi_{i}}\}_{i=k-N+s+1}^{R_r(m)}$ that are orthogonal to each other with respect to $S_q$.  For the first setting, as $s\geq 1$, by the inductive assumption,  we know that our statement is true. For the latter setting, WLG let $\{\ket{\psi_{k-N}^q}\}\cup\{\ket{\psi_i^q}\}_{i=k-N+s+1}^{k+1}$ be an orthogonal set. Taking  $M_{k-N,k-N+s+1,k-N+s+2,\cdots,k+1}^q,$   we can   exclude at least $(N+1-s)$ more states for each possible outcome.  Hence, we could exclude at least $(k-N-1)+s+(N+1-s)=k$ states in total.

		              \item If there is no state in $\{\ket{\psi_i^1}\}_{i=k-N+1}^{R_r(m)}$ that is orthogonal to $\ket{\psi_{k-N}^1}$. Then from the condition
		              \[R_r(m)-(k-N)\geq R(k-N+1,N+1,\cdots,N+1;r),\]
		              there must be either $k-N+1$ states in $\{\ket{\psi_i}\}_{i=k-N+1}^{R_r(m)}$ which are orthogonal to each other with respect to $S_1$, or $\exists\ q\in\{2,\cdots,r\}$, s.t., there are $N+1$ states in $\{\ket{\psi_i}\}_{i=k-N+1}^{R_r(m)}$ which are orthogonal to each other with respect to $S_q$. If it is the first case, then by the inductive assumption we know the statement is true; if it is the latter case, WLG let $\{\ket{\psi_i^q}\}_{i=k-N+1}^{k+1}$ be an orthogonal set. Taking $M_{k-N+1,\cdots,k+1}^q$, if the outcome is not one of $k-N+1,\cdots,k+1$, then we exclude $N+1$ more states, that is, we can totally exclude $k-N-1+(N+1)=k$ states; if the outcome is one of $k-N+1,\cdots,k+1$, WLG let the outcome be $k+1$, then we can exclude $N$ states $\ket{\psi_{k-N+1}},\cdots,\ket{\psi_k}$, and if $\braket{\psi_{k+1}^q}{\psi_{k-N}^q}=0$, then exclude $\ket{\psi_{k-N}}$ and so we have already excluded $(k-N-1)+N+1=k$ states; if $\braket{\psi_{k+1}^q}{\psi_{k-N}^q}\neq0$, then there must be $v\in\{1,\cdots,r\}$($v\neq1,q$), s.t., $\braket{\psi_{k+1}^v}{\psi_{k-N}^v}=0$, then $M_{k+1,k-N}^v$ is sufficient to exclude at least one more state, then we can totally exclude $(k-N-1)+N+1=k$ states. Hence, we can exclude $(k-N-1)+(N+1)=k$ states totally.

		       \end{enumerate}
		     \end{enumerate}
		     By induction we can conclude that the result is true for all cases.
		     \end{enumerate}
	 	    \end{proof} 
	 	    
	 	    Notice that \autoref{Theorem1} is just the special form of \autoref{Theorem3} when $r=2$. From \autoref{Theorem3} we can obtain the following \autoref{Corollary2} by means of $r$-color Ramsey number, just similar to \autoref{Corollary1}. 
	 	    
	 	    \begin{corollary}\label{Corollary2}
	 	        (1) For any $R_r(m)$ orthogonal $r$-partite product states$(m\geq4, m\in\mathbb{N^{+}})$, a single copy is sufficient to exclude $m+1$ states via LOCC protocols; (2) For any $R(m,m,m)$ orthogonal tripartite product states $(m\geq5, m\in\mathbb{N^{+}})$, a single copy is sufficient to exclude $m+2$ states via LOCC protocols.
	 	    \end{corollary}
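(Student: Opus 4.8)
The plan is to read off both parts of \autoref{Corollary2} from \autoref{Theorem3} by choosing $k$ appropriately and checking that the inequalities \eqref{eq:ineqcondition2} hold, in direct parallel to the way \autoref{Corollary1} was deduced from \autoref{Theorem1}. The computational engine is the elementary collapse identity $R(a_1,\cdots,a_{s-1},2;s)=R(a_1,\cdots,a_{s-1};s-1)$ --- an $s$-coloring either exhibits an edge of the last color (a monochromatic $K_2$) or uses only the first $s-1$ colors --- together with $R(\cdots,1,\cdots)=1$. Iterating the collapse gives $R(a,2,\cdots,2;s)=a$, which evaluates every Ramsey number that carries a tail of $2$'s.

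For part (1) I would take $k=m+1$, so that $k-m=1$ and only $t=0,1$ occur. The $t=1$ inequality has right-hand side $R(m+2,1,\cdots,1;r)=1$ and hence amounts to $R_r(m)\geq m+2$; the $t=0$ inequality has right-hand side $R(m+1,2,\cdots,2;r)=m+1$ and hence amounts to $R_r(m)\geq 2m+1$. Both are lower bounds on $R_r(m)$, and since \autoref{Proposition4} gives $R_r(m)\geq R_2(m)=R(m,m)$ for all $r\geq2$, it is enough to confirm $R(m,m)\geq 2m+1$ for $m\geq4$. For $m\geq9$ this is immediate from $R(m,m)>2^{m/2}$ (\autoref{Prop:lowerbound}), and the handful of small cases $4\leq m\leq 8$ are cleared by the tabulated values and lower bounds (\autoref{table1}, \autoref{table2}).

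For part (2) I would set $r=3$ and $k=m+2$, so $k-m=2$ and $t$ runs over $0,1,2$. The $t=2$ and $t=1$ inequalities collapse to the linear bounds $R(m,m,m)\geq m+3$ and $R(m,m,m)\geq 2m+3$, both immediate from $R(m,m,m)>3^{m/2}$ (\autoref{Proposition5}) once $m\geq5$. The binding constraint is the $t=0$ inequality $R(m,m,m)-m\geq R(m+1,3,3;3)$. Here I would apply \autoref{Proposition6} with two $3$'s (so $r=2$ and $r+1=3$ colors, the large entry being $m+1$) to obtain $R(m+1,3,3;3)\leq 2!\,(m+1)^3=2(m+1)^3$, reducing the claim to the single inequality $3^{m/2}>2(m+1)^3+m$.

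The heart of the matter is exactly this last inequality. The exponential $3^{m/2}$ certainly overtakes the cubic $2(m+1)^3+m$, but only beyond a moderate threshold (roughly $m\geq18$), so the crude bounds leave an intermediate window $5\leq m\leq 17$ untreated. Closing that window is the delicate step: it forces one to substitute the sharpest known lower bounds for the diagonal three-color numbers $R(m,m,m)$ against the best available upper bounds for $R(m+1,3,3;3)$ from the Ramsey-number literature and verify each residual case individually --- the three-color counterpart of the table-driven endgame already carried out for \autoref{Corollary1}.
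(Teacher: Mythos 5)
Your proposal is correct and follows essentially the same route as the paper: both parts are read off from \autoref{Theorem3} with $k=m+1$ (resp.\ $r=3$, $k=m+2$), the trailing inequalities collapse to linear lower bounds on $R_r(m)$ handled via \autoref{Proposition4}, \autoref{Prop:lowerbound} and \autoref{Proposition5}, and the binding case $R(m,m,m)-m\geq R(m+1,3,3)$ is treated exactly as in the paper by combining \autoref{Proposition6} (giving $2(m+1)^3$) with the asymptotic bound $3^{m/2}$ for $m\geq18$ and the tabulated lower bounds on $R(m,m,m)$ for $5\leq m\leq17$. The only cosmetic differences are that the paper bounds the linear cases via $R(m,m,m)\geq R(m,m)\geq 2m+3$ rather than $3^{m/2}$, and it closes the window $9\leq m\leq 17$ in one stroke using monotonicity and $R(9,9,9)\geq 14081$.
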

	 	    
	 	     \begin{proof}
	 	        We just need to check the relevant inequalities of \autoref{Theorem3}.
	 	         \begin{enumerate}[(1)]
	 	             \item First, we only need to check
	 	             \begin{equation*}
                      \begin{array}{l}
		                  R_r(m)-m \geq R(m+1,2,\cdots,2;r)=m+1,\\
		                  R_r(m)-(m+1) \geq R(m+2,1,\cdots,1;r)=1,\\
		              \end{array}
		              \end{equation*}
		              when $m\geq4$. From \autoref{Proposition4} we can conclude that $R_r(m)\geq R(m,m)$, and it's easy to check that $R(m,m)\geq 2m+1\geq m+2$ when $m\geq4$. So the two inequalities hold when $m\geq4$; 
		              \item Moreover, we  need to check that when $m\geq5$ we have
		              \begin{equation*}
		              \begin{array}{l}
                         R(m,m,m)-m\geq R(m+1,3,3),\\
                         R(m,m,m)-(m+1)\geq R(m+2,2,2),\\
                         R(m,m,m)-(m+2)\geq R(m+3,1,1).
                      \end{array}  
		              \end{equation*}
		              \begin{enumerate}[$\bullet$]
		                  \item The last two inequalities. We know $R(m,m,m)\geq R(m,m)$, and it's easy to see that $R(m,m)\geq2m+3$ when $m\geq5$. Thus $R(m,m,m)\geq2m+3\geq m+3$. That is, the last two inequalities hold when $m\geq5$;
		                  \item $R(m,m,m)-m\geq R(m+1,3,3)$. First from \autoref{Proposition5} and \autoref{Proposition6} we have $R(m,m,m)\geq3^{\frac{m}{2}}$ and $R(m+1,3,3)+m\leq2(m+1)^3+m$, and it's easy to show that $R(m+1,3,3)+m\leq2(m+1)^3+m<3^{\frac{m}{2}}<R(m,m,m)$ when $m\geq18$, that is, the inequality $R(m,m,m)-m\geq R(m+1,3,3)$ holds for $m\geq18$. When $5\leq m\leq 17$, from the known lower bound of $R(m,m,m)$, see \autoref{table4}, we can conclude that 
		                  \begin{equation*}
                             R(17,17,17)\geq\cdots\geq R(9,9,9)\geq14081\geq2\cdot(17+1)^3+17\geq R(17+1,3,3)+17.
                          \end{equation*}
                          That is, the inequality $R(m,m,m)-m\geq R(m+1,3,3)$ holds for $9\leq m\leq17$. And when $5\leq m\leq9$, comparing the known lower bound of $R(m,m,m)$(see \autoref{table4}) with the upper bound $2(m+1)^3+m$ of $R(m+1,3,3)$, we can also obtain that $R(m,m,m)-m\geq R(m+1,3,3)$ holds.
		              \end{enumerate}
	 	         \end{enumerate}
	 	    \end{proof}

	 	    \begin{table}[H]
	 	    \centering
	 	    \scalebox{1}{
\begin{tabular}{|c|c|c|c|c|c|c|c|}
\hline
\diagbox{r}{m}  & 3    & 4      & 5     & 6     & 7     & 8      & 9     \\ \hline
3 & 17   & 128    & 454   & 1106  & 3214  & 6132   & 14081 \\ \hline
4 & 51   & 634    & 4073  & 21302 & 84623 & 168002 &       \\ \hline
5 & 162  & 4176   & 38914 &       &       &        &       \\ \hline
6 & 538  & 32006  &       &       &       &        &       \\ \hline
7 & 1682 & 160024 &       &       &       &        &       \\ \hline
8 & 5288 &        &       &       &       &        &       \\ \hline
\end{tabular}
}
\caption{Known lower bounds of  $R_{r}(m)$\cite{radziszowski2011small}.}
\label{table4}
\end{table}

Finally, with the result above, we can prove the following \autoref{Theorem4}, which further reveals that $f_r(N)$ is exactly smaller than any form of $\lceil\epsilon N\rceil$($\epsilon>0$) once $N$ is sufficiently large. This indicates that $f_r(N)$ is far less than Shu's upper bound $\lceil\frac{N}{4}\rceil+1$ when $N$ is large\cite{shu2021locality}.  
	 	    
	 	   \begin{theorem}\label{Theorem4}
	 	        $\forall\ \epsilon>0$, $\exists N_0\in\mathbb{N }$, $s.t.$ $\forall\ N\geq  N_0$, $f_r(N) \leq \lceil\epsilon N\rceil$.
	 	    \end{theorem}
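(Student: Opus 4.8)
The plan is to leverage \autoref{Corollary2}(1) together with the observation (recorded in the remark following \autoref{Theorem1}, which explicitly covers \autoref{Corollary2}) that its conclusion extends from exactly $R_r(m)$ states to any collection of at least $R_r(m)$ states. Fix $\epsilon>0$. First I would choose a single integer $m$, depending only on $\epsilon$, large enough to make the ``excision rate'' small: take $m=\max\{4,\lceil 1/\epsilon\rceil\}$, so that $m\geq 4$ (hence \autoref{Corollary2}(1) applies) and $\frac{1}{m+1}<\epsilon$. With this $m$ fixed, the Ramsey number $R_r(m)$ is a constant. The crux is that we may first pin down the rate $\frac{1}{m+1}$ and only afterwards worry about the additive constant $R_r(m)$.

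Next I would describe a two-phase LOCC protocol for any $N$ orthogonal $r$-partite product states. In Phase 1, as long as the number of candidate states is at least $R_r(m)$, consume one copy and apply the single-copy procedure of \autoref{Corollary2}(1), which excludes at least $m+1$ states; thus each copy reduces the candidate count by at least $m+1$ and leaves a genuine instance of the same problem (a smaller set of mutually orthogonal $r$-partite product states) with one fewer copy. After $j$ such copies at most $N-j(m+1)$ candidates remain, and Phase 1 terminates at the first step $j_1$ for which the count drops below $R_r(m)$; a direct estimate gives $j_1 \leq \frac{N-R_r(m)}{m+1}+1 < \frac{N}{m+1}+1$. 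In Phase 2 we are left with fewer than $R_r(m)$ mutually orthogonal product states, which by the result of Walgate \emph{et al.} \cite{Wal00} can be distinguished with at most $R_r(m)-2$ further copies. Summing, the protocol uses at most
$$ j_1 + (R_r(m)-2) < \frac{N}{m+1} + R_r(m) - 1 $$
copies, so that $f_r(N) \leq \frac{N}{m+1}+R_r(m)-1$ for every $N\geq R_r(m)$.

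Finally I would absorb the constant. Put $\delta := \epsilon-\frac{1}{m+1}>0$. Then $\epsilon N - \frac{N}{m+1}=\delta N$, so the inequality $\frac{N}{m+1}+R_r(m)-1\leq \epsilon N$ holds as soon as $N\geq (R_r(m)-1)/\delta$. Setting $N_0 := \max\{R_r(m),\ \lceil (R_r(m)-1)/\delta\rceil\}$, for all $N\geq N_0$ we obtain $f_r(N)\leq \epsilon N \leq \lceil \epsilon N\rceil$, which is the claim.

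I expect no deep obstacle; the argument is essentially a ``fixed rate plus bounded tail'' recursion. The one point that genuinely requires care is the order of the quantifiers: the constant $R_r(m)$ may grow extremely fast in $m$, so it is essential to fix $m$ (and therefore both the rate $\frac{1}{m+1}$ and the constant $R_r(m)$) \emph{before} choosing the threshold $N_0$. A secondary point worth stating cleanly is the justification, via the extension remark after \autoref{Theorem1}, that excluding at least $m+1$ states really does return a valid smaller instance, so that Phase 1 may be iterated and Phase 2 invoked.
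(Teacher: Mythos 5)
Your proposal is correct and takes essentially the same route as the paper: fix $m$ (depending only on $\epsilon$) with $\frac{1}{m+1}<\epsilon$, iterate the single-copy exclusion of $m+1$ states guaranteed by \autoref{Corollary2}(1) while at least $R_r(m)$ candidates remain, and then absorb the constant-size remainder by taking $N$ large. The only cosmetic differences are that you finish the leftover (fewer than $R_r(m)$) states with Walgate \emph{et al.}'s $(N-1)$-copy bound instead of Shu's $\lceil N/4\rceil+1$ bound, and you phrase the conclusion with an explicit $N_0$ rather than via a $\limsup$; both tails are constants depending only on $\epsilon$, so the arguments coincide.
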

	 	    \begin{proof}
	 	         It is sufficient to prove that $\forall\ \epsilon>0$   $$\limsup_{N\rightarrow \infty}{\frac{f_r(N)}{N}}< \epsilon.$$ There exists a positive integer $m>4$, s.t., $\epsilon>\frac{1}{m}$.  For any orthogonal product states  with $N\geq R_r(m)$ elements, a single copy   is sufficient to exclude  $m+1$ states via  LOCC protocols. Written  $R_r(m)$ as the form 
            $$ R_r(m)=q_0(m+1)+r_0,  $$ where $q_0,r_0 \in \mathbb{N}$ and  $0\leq r_0<m+1.$   Then for any  $\forall\ N\geq (q_0+1)(m+1)$ which can be written as the unique form  $$ N=q_N(m+1)+r_N,  $$ where $q_N,r_N \in \mathbb{N},$    $q_N\geq q_0+1,$ and $0\leq r_N<m+1.$ By  \autoref{Corollary2},  we
can use $(q_N-q_0)$ copies to exclude $(m+1)(q_N-q_0)$ states and left $ q_0(m+1)+r_N$ states.   By Shu's result, we can always locally distinguish the remaining states provided 
$\lceil \frac{ q_0(m+1)+r_N} {4}\rceil+1 $  more copies.  Note that $\lceil \frac{ q_0(m+1)+r_N} {4}\rceil \leq \lceil \frac{ (q_0+1)(m+1)} {4}\rceil $. Therefore, to LOCC distinguish the $N$ product states, 
$(q_N-q_0)+\lceil \frac{ (q_0+1)(m+1)} {4}\rceil+1$ copies  are always  enough. That is, we always have 
$$f_r(N)\leq (q_N-q_0)+\lceil \frac{ (q_0+1)(m+1)} {4}\rceil+1. $$ Denote $M_\epsilon:= \lceil \frac{ (q_0+1)(m+1)} {4}\rceil+1-q_0$ which is a constant   depended on $\epsilon.$ Therefore, 
$$ \frac{f_r(N)}{N}\leq \frac{q_N}{N}+\frac{M_\epsilon}{N}\leq \frac{1}{m+1}+\frac{M_\epsilon}{N}.$$  Hence, we have  
 $$\limsup_{N\rightarrow \infty}{\frac{f_r(N)}{N}}\leq \limsup_{N\rightarrow \infty}\left( \frac{1}{m+1}+\frac{M_\epsilon}{N}\right)=\frac{1}{m+1}<\epsilon.$$

	 	    \end{proof}

		\section{Conclusion and Discussion}\label{sec:Conclusion} 
	In this paper, we studied the LOCC distinguishability of any $N$ orthogonal $r$-partite($r\geq2$) product states under the condition that multicopies of the states are allowed. By the language of Ramsey theory, we show that in any orthogonal set $\{\ket{\psi_i}=\ket{\psi_i^1}\otimes\cdots\otimes\ket{\psi_i^r}\}_{i=1}^{N}$, no matter what the structure of $\{\ket{\psi_i}=\ket{\psi_i^1}\otimes\cdots\otimes\ket{\psi_i^r}\}_{i=1}^{N}$ is, there must be some states that are orthogonal to each other with respect to a subsystem, once $N$ is large enough.
		
		From the observation above, we gave a sufficient condition for any $R_r(m)$ orthogonal $r$-partite product states to exclude $k$ states via a single copy, and as a corollary we proved that for any $R_r(m)$ $r$-partite product states($r\geq2, m\geq4$), a single copy is sufficient to exclude $m+c$($c=1,2,3$) states via LOCC protocols. Furthermore, we showed that the least number $f_r(N)$ of copies  which ensures the LOCC distinguishability of any $N$ orthogonal $r$-partite product states is smaller than $\lceil\epsilon N\rceil$ for any $\epsilon>0$ , once $N$ is sufficiently large, rather than just $f_r(N)\leq\lceil\frac{N}{4}\rceil+1$, thus improving Shu's result \cite{shu2021locality}. And particularly for bipartite case, we give a new upper bound $\lceil\frac{N}{6}\rceil+2$ of $f_2(N)$, which is less than Shu's upper bound $\lceil\frac{N}{4}\rceil$ when $N>24$.
		
		We used the language of Ramsey theory to formalize the existence of some local substructure of any sufficiently large orthogonal set of product states.  Thus we  obtained  better results for the LOCC distinguishability of any orthogonal set of product states via multicopies. The usage of Ramsey theory shows a formal way to deal with the localization problem of product states, which relies on the results of Ramsey theory. And we believe that the viewpoint of Ramsey theory can help us improve the understanding of locality of product states.

		\vspace{2.5ex}


\begin{thebibliography}{10}
			
			\bibitem{nils} M. A. Nielsen and I. L. Chuang. Quantum Computation and
			Quantum Information(Cambridge University Press, Cambridge, U.K., 2004).
			
			  \bibitem{bennett1999quantum} C. H. Bennett, D. P. DiVincenzo, C. A. Fuchs, T. Mor, E. Rains, P. W. Shor, J. A. Smolin, and W. K. Wootters. Quantum nonlocality without entanglement.
			\href{https://doi.org/10.1103/PhysRevA.59.1070}{Phys. Rev. A \textbf{59}, 1070 (1999)}.
			
			
            \bibitem{hillery1999quantum} M. Hillery, V. Bu{\v{z}}ek, and A. Berthiaume. Quantum secret sharing.
            \href{https://doi.org/10.1103/PhysRevA.59.1829}{Phys. Rev. A \textbf{59}, 1829 (1999)}.

            \bibitem{terhal2001hiding} B. M. Terhal, D. P. DiVincenzo, and D. W. Leung. Hiding Bits in Bell States.
            \href{https://doi.org/10.1103/PhysRevLett.86.5807}{Phys. Rev. Lett. \textbf{86}, 5807 (2001)}.

            \bibitem{rahaman2015quantum} R. Rahaman, and M. G. Parker. Quantum scheme for secret sharing based on local distinguishability.
            \href{https://doi.org/10.1103/PhysRevA.91.022330}{Phys. Rev. A \textbf{91}, 022330 (2015)}.
            
            \bibitem{Brunner-et-al-2014} N. Brunner, D. Cavalcanti, S. Pironio,
            V. Scarani, and S. Wehner, Bell nonlocality, \href{https://doi.org/10.1103/RevModPhys.86.419}{Rev. Mod. Phys. {\bf 86}, 419 (2014)};
            Erratum, \href{https://doi.org/10.1103/RevModPhys.86.839}{Rev. Mod. Phys. {\bf 86}, 839 (2014)}.

          
            
          
            

            \bibitem{bennett1999unextendible} C. H. Bennett, D. P. DiVincenzo, T. Mol, P. W. Shor, J. A. Smolin, and B. M. Terhal. Unextendible product bases and bound entanglement.
            \href{https://doi.org/10.1103/PhysRevLett.82.5385}{Phys. Rev. Lett. \textbf{82}, 5385 (1999)}.
            
            \bibitem{Wal00}J. Walgate, A. J. Short, L. Hardy, and V. Vedral. Local
            Distinguishability of Multipartite Orthogonal Quantum States.
            \href{https://doi.org/10.1103/PhysRevLett.85.4972}{Phys. Rev. Lett. \textbf{85}, 4972 (2000)}.

            \bibitem{de2004distinguishability} S. De Rinaldis. Distinguishability of complete and unextendible product bases.
            \href{https://doi.org/10.1103/PhysRevA.70.022309}{Phys. Rev. A \textbf{70}, 022309 (2004)}.
            
            \bibitem{Ghosh-2001} S. Ghosh, G. Kar, A. Roy, A. Sen (De), and U.
            Sen, Distinguishability of Bell states, \href{https://doi.org/10.1103/PhysRevLett.87.277902}{Phys. Rev. Lett. {\bf87}, 277902 (2001)}.
            
        
            
            \bibitem{walgate-2002} J. Walgate and L. Hardy, Nonlocality, asymmetry,
            and distinguishing bipartite states, \href{https://doi.org/10.1103/PhysRevLett.89.147901}{Phys. Rev. Lett. {\bf89}, 147901 (2002)}.
            
            \bibitem{HSSH} M. Horodecki, A. Sen(De), U. Sen, and K. Horodecki,
            Local indistinguishability: more nonlocality with less entanglement,
            \href{https://doi.org/10.1103/PhysRevLett.90.047902}{Phys. Rev. Lett. {\bf90}, 047902 (2003)}.
            
            \bibitem{divin03} D. P. DiVincenzo, T. Mor, P. W. Shor, J. A. Smolin,
            and B. M. Terhal, Unextendible product bases, uncompletable product
            bases and bound entanglement, \href{https://doi.org/10.1007/s00220-003-0877-6}{Commun. Math. Phys. {\bf238}, 379 (2003)}.
            
            
               \bibitem{chen2004distinguishing} P. X. Chen, and C. Z. Li. Distinguishing the elements of a full product basis set needs only projective measurements and classical communication.
            \href{https://doi.org/10.1103/PhysRevA.70.022306}{Phys. Rev. A \textbf{70}, 022306 (2004)}.
            
            \bibitem{rin04} S. De Rinaldis, Distinguishability of complete and
            unextendible product bases, \href{https://doi.org/10.1103/PhysRevA.70.022309}{Phys. Rev. A {\bf70}, 022309 (2004)}.
            
         
            
            \bibitem{Ghosh-2004} S. Ghosh, G. Kar, A. Roy, and D. Sarkar, Distinguishability
            of maximally entangled states, \href{https://doi.org/10.1103/PhysRevA.70.022304}{Phys. Rev. A {\bf70}, 022304 (2004)}. 
            
            \bibitem{fan-2005} H. Fan, Distinguishability and indistinguishability
            by local operations and classical communication, \href{https://doi.org/10.1103/PhysRevLett.92.177905}{Phys. Rev. Lett. {\bf92}, 177905 (2004)}.
            
            \bibitem{Nathanson-2005} M. Nathanson, Distinguishing bipartite orthogonal
            states by LOCC: best and worst cases, \href{https://doi.org/10.1063/1.1914731}{Journal of Mathematical Physics {\bf46}, 062103 (2005)}.
            
           
            
     
           
      
            
            \bibitem{Duan2007} R. Y. Duan, Y. Feng, Z. F. Ji, and M. S. Ying,
            Distinguishing arbitrary multipartite basis unambiguously using local
            operations and classical communication, \href{https://doi.org/10.1103/PhysRevLett.98.230502}{Phys. Rev. Lett. {\bf98}, 230502 (2007)}.
            
            \bibitem{feng09} Y. Feng and Y.-Y. Shi, Characterizing locally indistinguishable
            orthogonal product states, \href{https://doi.org/10.1109/TIT.2009.2018330}{IEEE Trans. Inf. Theory {\bf55}, 2799 (2009)}.
            
            \bibitem{Duan-2009} R. Y. Duan, Y. Feng, Y. Xin, and M. S. Ying,
            Distinguishability of quantum states by separable operations, \href{https://doi.org/10.1109/TIT.2008.2011524}{IEEE Trans. Inform. Theory {\bf55}, 1320 (2009)}. 
            
            
            
            \bibitem{BGK-2011} S. Bandyopadhyay, S. Ghosh and G. Kar, LOCC distinguishability
            of unilaterally transformable quantum states, \href{https://doi.org/10.1088/1367-2630/13/12/123013}{New J. Phys. {\bf13}, 123013 (2011)}.
            
            \bibitem{Bandyo-2011} S. Bandyopadhyay, More nonlocality with less
            purity, \href{https://doi.org/10.1103/PhysRevLett.106.210402}{Phys. Rev. Lett. {\bf106}, 210402 (2011)}.
            
            \bibitem{Yu-Duan-2012} N. Yu, R. Duan, and M. Ying, Four Locally
            Indistinguishable Ququad-Ququad Orthogonal Maximally Entangled States,
            \href{https://doi.org/10.1103/PhysRevLett.109.020506}{Phys. Rev. Lett. {\bf109}, 020506 (2012)}.
            
            \bibitem{Cosentino-2013} A. Cosentino, Positive-partial-transpose-indistinguishable
            states via semidefinite programming, \href{https://doi.org/10.1103/PhysRevA.87.012321}{Phys. Rev. A {\bf87}, 012321 (2013)}.
            \bibitem{BN-2013} S. Bandyopadhyay, M. Nathanson, Tight bounds on
            the distinguishability of quantum states under separable measurements,
            \href{https://doi.org/10.1103/PhysRevA.88.052313}{Phys. Rev. A  {\bf88}, 052313 (2013)}.
            
               \bibitem{Yang13} Y.-H. Yang, F. Gao, G.-J. Tian, T.-Q. Cao, and Q.-Y.
            Wen, Local distinguishability of orthogonal quantum states in a $2\otimes2\otimes2$
            system, \href{https://doi.org/10.1103/PhysRevA.88.024301}{Phys. Rev. A {\bf88}, 024301 (2013)}. 
            
            \bibitem{zhang14} Z.-C. Zhang, F. Gao, G.-J. Tian, T.-Q. Cao, and
            Q.-Y. Wen, Nonlocality of orthogonal product basis quantum states,
            \href{https://doi.org/10.1103/PhysRevA.90.022313}{Phys. Rev. A {\bf90}, 022313 (2014)}. 
            
            \bibitem{zhang15} Z.-C. Zhang, F. Gao, S.-J. Qin, Y.-H. Yang, and
            Q.-Y. Wen, Nonlocality of orthogonal product states, \href{https://doi.org/10.1103/PhysRevA.92.012332}{Phys. Rev. A {\bf92}, 012332 (2015)}.
            
            \bibitem{Li15} M.-S. Li, Y.-L. Wang, S.-M. Fei and Z.-J. Zheng, $d$ locally indistinguishable maximally entangled states in $\mathbb{C}^d\otimes\mathbb{C}^d$, \href{https://doi.org/10.1103/PhysRevA.91.042318}{Phys.
            	Rev. A \textbf{91}, 042318 (2015)}.
            
             \bibitem{wang15} Y.-L. Wang, M.-S. Li, Z.-J. Zheng, and S.-M. Fei,
            Nonlocality of orthogonal product-basis quantum states, \href{https://doi.org/10.1103/PhysRevA.92.032313}{Phys. Rev. A {\bf92}, 032313 (2015)}.
            
            \bibitem{Yang15} Y.-H. Yang, F. Gao, G.-B. Xu, H.-J. Zuo, Z.-C. Zhang,
            and Q.-Y. Wen, Characterizing unextendible product bases in qutrit-ququad
            system, \href{https://doi.org/10.1038/srep11963}{Sci. Rep. {\bf5}, 11963 (2015).}
            
                      
            \bibitem{B-IQC-2015} S. Bandyopadhyay, A. Cosentino, N. Johnston,
            V. Russo, J. Watrous, and N. Yu, Limitations on separable measurements
            by convex optimization, \href{https://doi.org/10.1109/TIT.2015.2417755}{IEEE Transactions on Information Theory, {\bf61},  3593 (2015)}.
            
            \bibitem{xu16-1} G.-B. Xu, Y.-H. Yang, Q.-Y. Wen, S.-J. Qin, and
            F. Gao, Locally indistinguishable orthogonal product bases in arbitrary
            bipartite quantum system, \href{https://doi.org/10.1038/srep31048}{Sci. Rep. {\bf6}, 31048 (2016)}. 
            
            \bibitem{Xu-16-2} G.-B. Xu, Q.-Y. Wen, S.-J. Qin, Y.-H. Yang, and
            F. Gao, Quantum nonlocality of multipartite orthogonal product states,
            \href{https://doi.org/10.1103/PhysRevA.93.032341}{Phys. Rev. A {\bf93}, 032341 (2016)}.
            
            \bibitem{zhang16} Z.-C. Zhang, F. Gao, Y. Cao, S.-J. Qin, and Q.-Y.
            Wen, Local indistinguishability of orthogonal product states, \href{https://doi.org/10.1103/PhysRevA.93.012314}{Phys. Rev. A {\bf93}, 012314 (2016)}.
            
            \bibitem{zhang16-1} X. Zhang, X. Tan, J. Weng, and Y. Li, LOCC indistinguishable
            orthogonal product quantum states, \href{https://doi.org/10.1038/srep28864}{Sci. Rep. {\bf6}, 28864 (2016)}.
            
            \bibitem{Xu-17} G.-B. Xu, Q.-Y. Wen, F. Gao, S.-J. Qin, and H.-J.
            Zuo, Local indistinguishability of multipartite orthogonal product
            bases, \href{https://doi.org/10.1007/s11128-017-1725-5}{Quantum Inf. Processing {\bf16}, 276 (2017)}.
            
            \bibitem{Wang-2017-Qinfoprocess} Y.-L. Wang, M.-S. Li, Z.-J. Zheng,
            and S.-M. Fei, The local indistinguishability of multipartite product
            states, \href{https://doi.org/10.1007/s11128-016-1477-7}{Quantum Inf. Processing {\bf16}, 5 (2017)}.
            
            \bibitem{Zhang-Oh-2017} Z.-C. Zhang, K.-J. Zhang, F. Gao, Q.-Y. Wen,
            and C. H. Oh, Construction of nonlocal multipartite quantum states,
            \href{https://doi.org/10.1103/PhysRevA.95.052344}{Phys. Rev. A {\bf95}, 052344 (2017)}. 
            
            \bibitem{zhang17-1} X. Zhang, J. Weng, X. Tan, and W. Luo, Indistinguishability
            of pure orthogonal product states by LOCC, \href{https://doi.org/10.1007/s11128-017-1616-9}{Quantu Inf. Processing {\bf16}, 168 (2017)}.
            
            \bibitem{halder} S. Halder, Several nonlocal sets of multipartite
            pure orthogonal product states, \href{https://doi.org/10.1103/PhysRevA.98.022303}{Phys. Rev. A {\bf98}, 022303 (2018)}.
            
            \bibitem{Li18}M.-S. Li and Y.-L. Wang, Alternative method for deriving nonlocal multipartite product states, \href{https://journals.aps.org/pra/abstract/10.1103/PhysRevA.98.052352}{Phys. Rev. A \textbf{98}, 052352 (2018).}
            
              \bibitem{Halder19}S. Halder, M. Banik, S. Agrawal, and S. Bandyopadhyay, Strong Quantum Nonlocality without Entanglement, 	\href{https://journals.aps.org/prl/abstract/10.1103/PhysRevLett.122.040403}{Phys. Rev. Lett. \textbf{122}, 040403 (2019).}
            
            \bibitem{Zhang1906}Z.-C. Zhang and X. Zhang, Strong quantum nonlocality in multipartite quantum systems, \href{https://journals.aps.org/pra/abstract/10.1103/PhysRevA.99.062108}{Phys. Rev. A \textbf{99}, 062108 (2019).}
                     \bibitem{Shi20S}     F. Shi, M. Hu, L. Chen, and X. Zhang, Strong quantum nonlocality with entanglement, \href{https://journals.aps.org/pra/abstract/10.1103/PhysRevA.102.042202}{Phys. Rev. A \textbf{102},  042202 (2020).}
                     
                       \bibitem{Tian20} P. Yuan, G. J. Tian, and X. M. Sun, Strong quantum nonlocality without entanglement in multipartite quantum systems,	\href{https://journals.aps.org/pra/abstract/10.1103/PhysRevA.102.042228}{Phys. Rev. A \textbf{102},  042228 (2020).}
            
            \bibitem{Xu20b} D.-H. Jiang, and G.-B. Xu, Nonlocal sets of orthogonal product states in arbitrary multipartite quantum system, \href{https://journals.aps.org/pra/abstract/10.1103/PhysRevA.102.032211}{Phys. Rev. A  \textbf{102}, 032211 (2020).}
            
             \bibitem{Halder1909}S. Halder, and C. Srivastava, Locally distinguishing quantum states with limited classical communication,  	\href{10.1103/PhysRevA.101.052313}{Phys. Rev. A {\bf 101}, 052313 (2020).}	
            
            \bibitem{Halder20c}	S. Halder, R. Sengupta, 	Distinguishability classes, resource sharing, and bound entanglement distribution,  \href{https://journals.aps.org/pra/abstract/10.1103/PhysRevA.101.012311}{Phys. Rev. A \textbf{101},  012311 (2020).}
    
            
            
            
            
            \bibitem{Li20}M.-S. Li, S.-M. Fei, Z.-X. Xiong,  and Y.-L. Wang, Twist-teleportation-based local discrimination of maximally entangled states,  \href{https://link.springer.com/article/10.1007\%2Fs11433-020-1562-4}{SCIENCE CHINA Physics, Mechnics $\&$  Astronomy \textbf{63},  280312 (2020).}
            
                    
            \bibitem{yuanj20}J.-T. Yuan, Y.-H. Yang, and C.-H. Wang. Constructions of locally distinguishable sets of maximally entangled states which require two-way LOCC, \href{https://iopscience.iop.org/article/10.1088/1751-8121/abc43b}{J. Phys. A. \textbf{53}, 505304 (2020).}
            
            \bibitem{Xu20a} G.-B. Xu, and D.-H. Jiang, Novel methods to construct nonlocal sets of orthogonal product states in any bipartite high-dimensional system,\href{https://link.springer.com/article/10.1007/s11128-021-03062-8}{Quantum Inf. Process. \textbf{20}, 128 (2021).}	
            
           
            \bibitem{Ha21}	D. Ha and Y. Kwon, Quantum nonlocality without entanglement: explicit dependence on prior probabilities of nonorthogonal mirror-symmetric states,  \href{https://www.nature.com/articles/s41534-021-00415-0}{npj Quantum Inf. \textbf{7}, 81 (2021).}
            
            
            
      
            
            
            
            
            
            
         
            
           
          
            
   
            
          
            
            
            			\bibitem{Zuo21}	H.-J. Zuo, J.-H. Liu, X.-F. Zhen and  S.-M. Fei.   Nonlocal sets of orthogonal multipartite product states with less members. \href{https://link.springer.com/article/10.1007/s11128-021-03320-9}{Quant. Inf. Process. \textbf{20}, 382 (2021).}	
            			
            		 
            		
            	
            	
            	\bibitem{Yang21}Y.-H. Yang, G.-F. Mu, J.-T. Yuan, and C.-H. Wang. Distinguishability of generalized Bell states in arbitrary dimension system via one-way local operations and classical communication, \href{https://link.springer.com/article/10.1007/s11128-021-02990-9}{ Quantum Inf. Process.  \textbf{20}, 52 (2021).}
            	 
            
            
            \bibitem{Wang21}	  Y.-L.   Wang,   M.-S.   Li,   and   M.-H.   Yung,    Graph-connectivity-based strong quantum nonlocality with genuine entanglement, \href{https://journals.aps.org/pra/abstract/10.1103/PhysRevA.104.012424}{Phys. Rev. A 104, 012424 (2021). } 
            
     
            
            
            
            
           
            
            
                   \bibitem{Shi21} F. Shi, M.-S. Li, M. Hu, L. Chen, M.-H. Yung, Y.-L. Wang and X. Zhang,     Strongly nonlocal unextendible product bases do exist
            . 	\href{https://quantum-journal.org/papers/q-2022-01-05-619/}{Quantum \textbf{6}, 619 (2022).}	
            
            
            	\bibitem{Yuanj22}J.-T. Yuan, Y.-H. Yang, and C.-H. Wang. Finding out all locally indistinguishable sets of generalized Bell states. \href{https://quantum-journal.org/papers/q-2022-07-14-763/}{Quantum \textbf{6}, 763 (2022).}
            
           
            
            

			\bibitem{bandyopadhyay2011more} S. Bandyopadhyay. More Nonlocality with Less Purity.
			\href{https://doi.org/10.1103/PhysRevLett.106.210402}{Phys. Rev. Lett. \textbf{106}, 210402 (2011)}.
			
			
			 \bibitem{Banik21} M. Banik, T. Guha, M. Alimuddin, G. Kar, S. Halder, and S. S. Bhattacharya, Multicopy Adaptive Local Discrimination: Strongest Possible Two-Qubit Nonlocal Bases, \href{https://journals.aps.org/prl/abstract/10.1103/PhysRevLett.126.210505}{Phys. Rev. Lett. \textbf{126}, 210505 (2021).}
			
			
			\bibitem{shu2021locality} H. Shu. Locality of orthogonal product states via multiplied copies. \href{https://doi.org/10.1140/epjp/s13360-021-02100-9}{Eur. Phys. J. Plus \textbf{136}, 1172 (2021)}.
			
				\bibitem{robertson2021fundamentals}
			A. Robertson. Fundamentals of Ramsey Theory. 
			\href{https://doi.org/10.1201/9780429431418}{Chapman and Hall/CRC. New York 2021}.
			
			\bibitem{erdos1947some}
			P. Erd{\"o}s. Some remarks on the theory of graphs. \href{https://www.ams.org/journals/bull/1947-53-04/S0002-9904-1947-08785-1/}{Bull. Amer. Math. Soc. 53(4): 292-294 (1947).}
			
			\bibitem{min1993new}  S. E. Min. New Lower Bound Formulas for the Ramsey Numbers $N (k, k,..., k; 2)$. Mathematica Applicata, \textbf{6} (1993) suppl., 113-116.

			
			\bibitem{conlon2021lower} D. Conlon, and A. Ferber. Lower bounds for multicolor Ramsey numbers. \href{https://doi.org/10.1016/j.aim.2020.107528}{\textit{Advances in Mathematics}, \textbf{378}, 107528  (2021)}.
			

			\bibitem{sarkozy2011monochromatic} G. N. S{\'a}rk{\"o}zy. Monochromatic cycle partitions of edge-colored graphs.
			\href{ https://doi.org/10.1002/jgt.20492}{\textit{Journal of graph theory}, \textbf{66}, 57–64 (2011)}.
			
		
			
			\bibitem{radziszowski2011small} S. Radziszowski. Small Ramsey Numbers. \href{https://doi.org/10.37236/21}{\textit{The Electronic Journal of Combinatorics}, revision  \textbf{16}, 116 (2021)}.
			
		
			
			
			
			
			
			
		\end{thebibliography}
	\end{document}